\definecolor{marron}{RGB}{60,30,10}
\definecolor{darkblue}{RGB}{0,0,80}
\definecolor{lightblue}{RGB}{80,80,80}
\definecolor{darkgreen}{RGB}{0,80,0}
\definecolor{darkgray}{RGB}{0,80,0}
\definecolor{darkred}{RGB}{80,0,0}
\definecolor{shadecolor}{rgb}{0.97,0.97,0.97}
\newcommand*\Bigcdot{\mathpalette\Bigcdot@{.5}}
\newcommand*\Bigcdot@[2]{\mathbin{\vcenter{\hbox{\scalebox{#2}{$\m@th#1\bullet$}}}}}
\newtheorem{thm}{Theorem}
\newtheorem{corollary}{Corollary}
\newtheorem{remark}{Remark}
\def\figref#1{Fig.~\ref{#1}}
\newcommand*\pFq[6][8]{%
  \begingroup 
  \pFqmuskip=#1mu\relax
  \begingroup\lccode`\~=`\,
  \lowercase{\endgroup\let~}\pFqcomma
  {}_{#2}F_{#3}{\left(\left. \genfrac..{0pt}{}{#4}{#5} \right\vert #6\right)}%
  \endgroup
}
\newcommand{\pFqcomma}{\mskip\pFqmuskip}
\mathchardef\pFcomma=\mathcode`, 
\begin{document}

\title{A Generalized Fading Model with Multiple Specular Components}

\author{Young Jin Chun
\thanks{Y. J. Chun is with the Wireless Communications Laboratory, ECIT Institute, Queen's University Belfast, Belfast, BT3 9DT, U.K. (e-mail: ychun@ieee.org).}%
}

\maketitle

\vspace{-10 mm}

\begin{abstract}
The wireless channel of 5G communications will have unique characteristics that can not be fully apprehended by the traditional fading models. For instance, the wireless channel may often be dominated by finite number of specular components, the conventional Gaussian assumption may not be applied to the diffuse scattered waves and the point scatterers may be inhomogeneously distributed. These physical attributes were incorporated into the state-of-the-art fading models, such as the $\kappa$-$\mu$ shadowed fading model, the generalized two-ray fading model and the fluctuating two ray fading model. Unfortunately, much of the existing published work commonly imposed arbitrary assumptions on the channel parameters to achieve theoretical tractability, thereby limiting their application to represent diverse range of propagation environments. This motivates us to find a more general fading model that incorporates multiple specular components with clusterized diffuse scattered waves, but achieves analytical tractability at the same time. To this end, we introduced the Multiple-Waves with Generalized Diffuse Scatter (MWGD) and Fluctuating Multiple-Ray (FMR) model that allow arbitrary number of specular components and assume generalized diffuse scattered model. We derive the distribution functions of the signal envelop in closed form and calculate second order statistics of the proposed fading model. Furthermore, we evaluate the performance metrics of wireless communications systems, such as the capacity, outage probability and average bit error rate. Through numerical simulations, we obtain important new insights into the link performance of the 5G communications while considering a diverse range of fading conditions and channel characteristics. 
\end{abstract}


\clearpage

\IEEEpeerreviewmaketitle

\section{Introduction}

The fifth generation of mobile technology (5G) is propositioned to achieve 1000 fold gains in capacity over the current 4G wireless networks, offer high data rates exceeding 10 Gigabits/s, provide seamless connection even to the users at the cell edge and enable ultra-reliable communications with extremely low latency of less than 1 millisecond \cite{DMCR&DCenterSamsungElectronicsCo.2015,Nokia2014}. To meet these seemingly contradicting requirements, 5G communications will require access to broad range of spectrum from below 1 GHz up to 100 GHz utilizing different properties of each band \cite{5GCM}. For example, low bands below 1 GHz will be used for cases that need wide coverage, whereas high bands above 24 GHz will be applied to cases that require extremely high throughput with very large bandwidth \cite{Zaidi2017}. In order to harness the full potential of utilizing diverse spectrum, it is crucial to develop an accurate fading model especially for frequencies above 6GHz\footnote{Fading models for frequencies below 6GHz are well developed and validated in the literature \cite{Haneda2016}.} that provides a good fit to the field measurements.

Recently, wide range of measurement campaigns had been conducted by many research organizations and consortium to statistically characterize the channel model of 5G communications, including METIS2020 \cite{METIS2020}, COST2100 \cite{Poutanen2011,Liu2012}, NYU WIRELESS \cite{MacCartney2015,Rappaport2015a,Rappaport2015,Rappaport2013,Samimi2016} and QUB \cite{Cotton2016, Yoo2017,Yoo2017a}. These measurement campaigns commonly observed the following two important characteristics:
\begin{enumerate}
	\item {\em Dominant specular scattering}: The transmission schemes that will be utilized in 5G communications are highly directive, \textit{e.g.}, millimeter wave (mmWave) or massive MIMO (multiple-input, multiple-output), which often cause the propagation environment to be dominated by the specular waves \cite{Medbo2014}. Furthermore, reflections by random objects may cause multiple dominant specular components to have significantly larger power than the diffuse waves. For example, the channel model of the mmWave transmission can often be approximated by a finite number of specular waves with no diffuse component \cite{Oestges2015,Cotton2009}. 
	\item {\em Clustering of the scattered waves}: The prevalent Rayleigh fading model assumes that the point scatterers are uniformly and homogeneously distributed. Under such assumptions, the central limit theorem (CLT) holds, so that the in-phase and quadrature components of the signal can be approximated by a Gaussian process with zero mean and equal variance. However, if there is a non-zero correlation among the clusters of the scattered waves, or there is a non-zero correlation between the in-phase and quadrature components, or the number of components are not large enough, then the CLT may not be applied \cite{Beaulieu2014,Yacoub2007a}. 
\end{enumerate}

These physical characteristics of 5G communications have been incorporated into several fading models in the literature already. In \cite{Durgin2002a}, Durgin considered a wireless channel where the received signal can be represented as a combination of multiple specular waves and a diffuse component, whose envelope follows Rayleigh distribution with scale parameter $\sigma$ as below 
\begin{equation}
    \begin{split}
        &\tilde{V} = R \mathrm{e}^{j \phi} = \sum_{i = 1}^{N} V_i \mathrm{e}^{j \phi_i} + \tilde{V}_{dif},\\ 
        &\tilde{V}_{dif} = V_d \mathrm{e}^{j \phi_d}, \quad V_d \sim \text{Rayleigh}(\sigma),
    \end{split}
    \label{q004-001}
\end{equation}
where $\tilde{V}$ denotes the received complex baseband signal with magnitude $R$ and phase $\phi$, $N$ is the number of specular waves, $\{V_i, \phi_i\}$ represents the magnitude and phase of the $i$-th specular component and $\tilde{V}_{dif}$ is the diffuse component with magnitude $V_d$ and phase $\phi_d$. 
The probability density function (PDF) of the signal envelope $R$ in (\ref{q004-001}) was originally presented in an integral expression, which was complicated to use in network performance analysis. Due to this limitation, Durgin considered the Two Wave with Diffuse Power (TWDP) model and proposed an approximated PDF form that is analytically tractable and presented in a closed form. The TWDP model was shown to provide a good fit to the field measurements in a variety of propagation scenarios \cite{Samimi2016, Rappaport1989} and widely adopted in the literature as the de-facto channel model \cite{Oh2005,Oh2007,Tan2011, Lu2012,Lu2012a,Dixit2013,Wang2014}.

In \cite{Rao2015,Rao2015a}, the authors proposed the Generalized Two-Ray Fading (GTR) model that generalizes the phase distribution of the TWDP model. Specifically, the TWDP model assume the phase of the individual component to be uniformly distributed, whereas the GTR model can incorporate arbitrary distributed phase. If the phase of the GTR model is assumed to be uniformly distributed, then the resultant model (referred as the \textit{GTR-U} model) becomes identical to the TWDP model. More practical phase models, such as the truncated phase (known as the \textit{GTR-T} model) or the Von Mises phase distribution (referred as the \textit{GTR-V} model), were introduced in \cite{Rao2015}.

The Rayleigh distributed diffuse model in (\ref{q004-001}) rely on the assumption that the CLT holds, which is certainly an approximation, because the diffuse component may be composed of relatively small number of scattered waves, or the scattered waves may be clustered with non-zero correlation among the clusters, or there might be a non-zero correlation between the in-phase and quadrature components. To reflect such channel environments, the authors in \cite{Beaulieu2014} proposed the \textit{generalized diffuse scatter model} where the envelope $V_d$ of the diffuse component follows the Nakagami-\textit{m} distribution. Since the Nakagami-\textit{m} distribution is a scaled representation of the central chi-squared distribution, which models a sum of the squares of zero-mean Gaussian random variables (RV), it can effectively characterize a diffuse component composed by the clustering of the scattered waves. Furthermore, the PDF of the signal envelope $R$ was derived in a closed form without any integral expression and the average bit error rate (BER) of an arbitrary modulation scheme was evaluated. 

The magnitudes of the specular components $\{V_1, \ldots, V_N\}$ were assumed constant in the original TWDP, GTR and generalized diffuse model. However, in practical wireless channels, the amplitude of the specular component often fluctuates due to a blockage by local objects, \textit{e.g.}, human-body shadowing  or mobile scatterers. This physical phenomenon was validated through field measurements \cite{Abdi2003,Paris2014,Cotton2014} and was incorporated in the literature on various channel environments, including the Rician shadowed fading \cite{Abdi2003} which generalizes the Rician fading model and the $\kappa$-$\mu$ shadowed fading \cite{Paris2014,Cotton2014} which generalizes the $\kappa$-$\mu$ fading model. In \cite{Romero-Jerez2016,Romero-Jerez2017}, the authors generalized the TWDP model by modulating the amplitude of the specular components, proposed the Fluctuating Two-Ray (FTR) fading model and numerically validated that the FTR model improves the fitting performance than the conventional fading models. In \cite{Beaulieu2015}, the authors completely relaxed the assumption of constant magnitudes and modeled the envelope of each components as Gaussian RVs with non-zero means. The PDF of the signal envelope in \cite{Beaulieu2015} was represented as the non-central chi-squared distribution, which includes the generalized Rician and $\kappa$-$\mu$ fading as special cases. 

Motivated by these approaches, we propose two fading models; namely, Multiple-Waves with Generalized Diffuse Scatter (MWGD) and Fluctuating Multiple-Ray (FMR) model, which are natural extension of the TWDP and FTR model, respectively. For the MWGD model, we assume arbitrary number $N$ of specular components, relax the Gaussian assumption on the diffuse component and assume that the magnitude of the diffuse component follows Nakagami-\textit{m} distribution. Based on the distribution functions, we evaluate important statistics of the proposed models, \textit{e.g.}, moments, amount of fading, moment generating function, and evaluate system performance metrics, \textit{e.g.}, ergodic capacity, outage probability and average BER.

The main contributions of this paper may be summarized as follows. 
\begin{enumerate}
	\item We consider generalized fading environments with arbitrary number of specular components, generalized diffuse scatter waves where the CLT may not hold and random fluctuation in the specular components. We derive the PDF and cumulative density function (CDF) of the received signal envelop in closed form using the confluent Hypergeometric function of $n$ variables. To improve tractability, we use Laguerre polynomial series expansion, which efficiently converts the multi-fold infinite summation inherent in the multi-variate Hypergeometric function to a single summation. 
	\item Using the series expression of the distribution functions, we derive statistics of the proposed fading model, including the moments, moment generating function, amount of fading and channel quality estimation index. Furthermore, we evaluate important performance metrics of wireless communications systems, including the capacity, outage probability and average bit error rate. Finally, we obtain the asymptotic expression of the performance metrics for low and high average signal-to-noise ratio (SNR). 
	\item We present numerical simulation results which provide useful insights into the performance of wireless communications with different channel characteristics. In particular, we observe that the MWGD and FMR fading models are inherently multimodal distribution with several different modes. However, as the number of specular component increases, the multimodality subsides and both model reduce to unimodal distributions. This information will be a paramount importance to those responsible for designing future 5G network infrastructure to ensure that adequate service can be provided. 
\end{enumerate}


The remainder of this paper is organized as follows. We describe the signal model in Section II and derive the distribution functions of the signal envelope in Section III. Using the series form expression of the distribution functions, we derive important statistics of the proposed fading models and evaluate system performance metrics in Section IV. We present numerical results in Section V and conclude the paper with some closing remarks in Section V. 



\section{Signal Model}

We consider a wireless channel where the received signal is represented as a summation of $N$ dominant specular components and a diffusely scattered wave whose magnitude follows the Nakagami-\textit{m} distribution with shape parameter $m$ and spread parameter $\Omega$ as expressed below
\begin{equation}
    \begin{split}
    \text{MWGD model:}~
    \begin{dcases}
        &\tilde{V} = R \mathrm{e}^{j \phi} = \sum_{i = 1}^{N} V_i \mathrm{e}^{j \phi_i} + \tilde{V}_{dif}, 
        \quad \tilde{V}_{dif} = V_d \mathrm{e}^{j \phi_d}\\ 
        &f_{V_d}(x) = \frac{2}{\Gamma(m)}\left( \frac{m}{\Omega} \right)^m x^{2m -1} \exp\left(-\frac{m}{\Omega} x^2\right).
    \end{dcases}
    \end{split}
    \label{q004-ext-000}
\end{equation}
We refer (\ref{q004-ext-000}) as the Multiple-Waves with Generalized Diffuse Scatter (MWGD) model, which generalize the TWDP model by extending the number of specular waves from $N = 2$ to arbitrary $N$, relaxing the Gaussian assumption and adopting the generalized diffuse scatter model. 

We also consider a variant of the MWGD model by introducing a random fluctuation $\xi$ on the specular components, which we referred as the Fluctuating Multiple-Ray (FMR) model
\begin{equation}
    \begin{split}
    \text{FMR model:}~
        &\tilde{V} = \sum_{i = 1}^{N} \sqrt{\xi} V_i \mathrm{e}^{j \phi_i} + \tilde{V}_{dif}, 
        \quad f_{\xi}(u) = \frac{{m_s}^{m_s} u^{{m_s}-1}}{\Gamma({m_s})} \mathrm{e}^{-{m_s} u},
    \end{split}
    \label{q004-ext-001}
\end{equation}
where $\xi$ is a Gamma RV with scale parameter $m_s$ and the envelope $V_d = \|\tilde{V}_{dif}\|$ of the diffuse component follows the Nakagami-\textit{m} distribution in (\ref{q004-ext-000}). Since the random fluctuation $\xi$ becomes increasingly deterministic as the parameter $m_s$ approaches infinity, the FMR model is equivalent to the MWGD model when $m_s \rightarrow \infty$. Similar to the MWGD model, the FMR model extends the FTR model \cite{Romero-Jerez2016,Romero-Jerez2017} to arbitrary number of specular components and generalized diffuse scatter model. The MWGD and FMR model are extremely versatile and include majority of the popular fading model as special cases (See Table $1$).

\begin{equation}
    \begin{split}
    \tilde{V} &= R \mathrm{e}^{j \phi} = \sum_{i = 1}^{N} V_i \mathrm{e}^{j \phi_i} + \tilde{V}_{dif}, 
        \quad \tilde{V}_{dif} = V_d \mathrm{e}^{j \phi_d}, 
    \end{split}
     \nonumber     
\end{equation}
where $\tilde{V}$ denotes the received complex baseband signal with magnitude $R$ and phase $\phi$, $N$ is the number of specular waves, $\{V_i, \phi_i\}$ represents the magnitude and phase of the $i$-th specular component and $\tilde{V}_{dif}$ is the diffuse component with magnitude $V_d$ and phase $\phi_d$. 

\begin{equation}
    \begin{split}
      f_R(r) &= r \int_{0}^{\infty} \Hypergeometric{1}{1}{1}{1}{-\frac{1}{4} \nu^2}
      \prod_{i=1}^{N} \BesselJ{0}{V_i \nu} \cdot \nu \BesselJ{0}{r \nu} \mathrm{d}\nu\\
      &= 2 \epsilon r \exp\left( -\epsilon r^2 \right) \cdot \sum_{n=0}^{\infty} C_n L_n\left( \epsilon r^2 \right),
    \end{split}
     \nonumber     
\end{equation}
where $L_n(x)$ is the Laguerre polynomial, $\gamma(s, x)$ is the lower incomplete gamma function, the positive, real-valued constant $\epsilon$ and the coefficient $C_n$ are respectively defined as below
\begin{equation}
    \begin{split}
    \epsilon = \frac{1}{\sum_{i=1}^{N} V_i^2 + \Omega}, \quad 
    C_n &= \mathbb{E}_{R}\left[ L_n(\epsilon r^2) \right].
    \end{split}
    \label{q004-IIIB-003}
\end{equation}
$\Psi_{2}\left(-;-;-\right)$ may be inappropriate for analytic studies due to its intractability. To resolve these issues, we exploit the Laguerre polynomial expansion \cite{Abdi2000,Chai2009} and derive the series form expressions of the distribution functions for both the MWGD and FMR model in the following theorem.

\section{Distribution of the Envelope $R$}

In this section, we will characterize the distribution of the received signal envelope $R$ for both MWGD and FMR model and derive the corresponding distribution of the received SNR $\gamma$
\begin{equation}
    \begin{split}
    &\gamma = \frac{E_b}{N_0} R^2 = \gamma_0 R^2, \quad 
    \bar{\gamma} = \mathbb{E}[ \gamma ] = \gamma_0 \left(\sum_{i=1}^{N} V_i^2 + \Omega \right),
    \end{split}
    \label{q004-003}
\end{equation}
where $E_b$ is the transmit symbol energy, $N_0$ is the noise power spectral density, $\gamma_0 = \frac{E_b}{N_0}$ and $\bar{\gamma}$ denotes the average received SNR. First, we will derive the PDF and CDF of the signal envelope in closed form using Hankel transform and multi-variate Hypergeometric function. Then, we will use Laguerre polynomial expansion to obtain the series form expression of the distribution functions, which is easier to use and improve analytical tractability. 


\subsection{Closed Form Expressions}

Let us consider uniformly distributed phase over $[0, 2\pi]$ for both MWGD and FMR model \cite{Durgin2002a}. Then, the PDF and CDF of the received signal envelop are derived in the following Theorems. 
\begin{thm}
  Let us consider the MWGD fading model as described in (\ref{q004-ext-000}). Then, the PDF and CDF of the received signal envelope $R$ are respectively given by 
  \begin{equation}
    \begin{split}
      f_R(r) &= r \int_{0}^{\infty} \Hypergeometric{1}{1}{m}{1}{-\frac{\Omega}{4m} \nu^2}
      \prod_{i=1}^{N} \BesselJ{0}{V_i \nu} \cdot \nu \BesselJ{0}{r \nu} \mathrm{d}\nu\\
      &\stackrel{(a_1)}{=} 
      \frac{2m}{\Omega} \cdot r \cdot \sum_{k=0}^{m-1} \binom{m-1}{k} (-1)^{k} 
      ~\Psi_{2}\left( k+1; [1]_{N+1}; \underline{\alpha}(r) \right) \quad \text{for } m \in \mathbb{Z}^{+},
    \end{split}
    \label{q004-004}
\end{equation}
  \begin{equation}
    \begin{split}
      F_R(t) &= t \int_{0}^{\infty} \Hypergeometric{1}{1}{m}{1}{-\frac{\Omega}{4m} \nu^2}
      \prod_{i=1}^{N} \BesselJ{0}{V_i \nu} \cdot \BesselJ{1}{t \nu} \mathrm{d}\nu\\
      &\stackrel{(a_2)}{=} 
      \frac{m}{\Omega} \cdot t^2 \cdot \sum_{k=0}^{m-1} \binom{m-1}{k} (-1)^{k} 
      ~\Psi_{2}\left( k+1; [1]_{N}, 2; \underline{\alpha}(t) \right)
      \quad \text{for } m \in \mathbb{Z}^{+},
    \end{split}
    \label{q004-005}
\end{equation}
where $\mathbb{Z}^{+}$ represents the set of positive integers, the equality ($a_1$) and ($a_2$) holds if the parameter $m$ takes positive integer values, $\BesselJ{n}{x}$ is the $n$-th order Bessel function of the first kind, $\Psi_{2}\left( a; b_1, \ldots b_n ; x_1, \ldots, x_n \right)$ is the Confluent Hypergeometric series of $n$ variables, which is described in (\ref{q004-eq_app_I-001}) of Appendix I, and $[x]_{n}$ and $\underline{\alpha}(x)$ represents the following sequences
  \begin{equation}
    \begin{split}
      [x]_{n} \triangleq \{ \underbrace{x, x, \ldots, x}_{n~\text{times}} \}, \quad 
      \underline{\alpha}(x) \triangleq \left\{ \frac{V_1^2 m}{\Omega}, \ldots, \frac{V_N^2 m}{\Omega}, \frac{x^2 m}{\Omega} \right\}.
    \end{split}
    \label{q004-006}
\end{equation}
\end{thm}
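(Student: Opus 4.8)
The plan is to treat $\tilde V$ as a sum of independent, circularly symmetric planar random vectors and to recover the envelope law from the two-dimensional characteristic function, which for a radially symmetric density reduces to a zeroth-order Hankel transform. Since each phase is uniform on $[0,2\pi]$, averaging $\mathrm{e}^{j\nu V_i\cos\theta}$ over $\theta$ produces $\BesselJ{0}{V_i\nu}$ for every specular component, while for the diffuse wave I would compute $\mathbb{E}_{V_d}[\BesselJ{0}{V_d\nu}]$ against the Nakagami-$m$ density in (\ref{q004-ext-000}). That expectation is the standard Gaussian--Bessel moment integral $\int_0^\infty x^{2m-1}\mathrm{e}^{-(m/\Omega)x^2}\BesselJ{0}{x\nu}\,\mathrm{d}x$, which evaluates in closed form to $\Hypergeometric{1}{1}{m}{1}{-\frac{\Omega}{4m}\nu^2}$ after absorbing the normalizing constant. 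Multiplying the per-component transforms (by independence) and inverting the Hankel transform yields the integral representation of $f_R$ in (\ref{q004-004}); integrating in $r$ and using $\int_0^t r\,\BesselJ{0}{r\nu}\,\mathrm{d}r=\frac{t}{\nu}\BesselJ{1}{t\nu}$ gives the integral form of $F_R$ in (\ref{q004-005}).

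For the closed-form equalities $(a_1)$ and $(a_2)$ I would specialize to $m\in\mathbb{Z}^{+}$. Applying Kummer's transformation, $\Hypergeometric{1}{1}{m}{1}{z}=\mathrm{e}^{z}\,L_{m-1}(-z)$, converts the confluent hypergeometric factor into a Gaussian $\mathrm{e}^{-\frac{\Omega}{4m}\nu^2}$ multiplied by the degree-$(m-1)$ Laguerre polynomial $L_{m-1}(\frac{\Omega}{4m}\nu^2)$. Expanding the latter as $\sum_{k=0}^{m-1}\binom{m-1}{k}\frac{(-1)^k}{k!}(\frac{\Omega}{4m})^k\nu^{2k}$ turns the integrand into a finite sum of terms of the form $\mathrm{e}^{-\frac{\Omega}{4m}\nu^2}\nu^{2k+1}\prod_{i}\BesselJ{0}{V_i\nu}\,\BesselJ{0}{r\nu}$ (and $\nu^{2k}\,\BesselJ{1}{t\nu}$ in the CDF case). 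Because the sum over $k$ is finite, I can interchange it with the integral without convergence concerns.

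The heart of the argument is the evaluation of each residual integral, $\int_0^\infty \mathrm{e}^{-p\nu^2}\nu^{2k+1}\prod_{i=1}^N\BesselJ{0}{V_i\nu}\,\BesselJ{0}{r\nu}\,\mathrm{d}\nu$ with $p=\frac{\Omega}{4m}$. I would expand every Bessel function into its ascending power series, collect the resulting monomial $\nu^{2(k+L)+1}$ where $L$ is the total multi-index, and integrate against the Gaussian using $\int_0^\infty \mathrm{e}^{-p\nu^2}\nu^{2s+1}\,\mathrm{d}\nu=\frac{s!}{2p^{s+1}}$. Writing $(k+L)!=k!\,(k+1)_L$ and recognizing the Pochhammer denominators --- a factor $(1)_{l_i}$ from each $\BesselJ{0}{}$ and a factor $(2)_{l}$ from the $\BesselJ{1}{}$ appearing in the CDF --- identifies the multi-index series with the Humbert function $\Psi_2$ defined in Appendix I, with top parameter $k+1$, lower parameters $[1]_{N+1}$ for the PDF and $[1]_N,2$ for the CDF, and arguments $\underline{\alpha}(\cdot)$ as in (\ref{q004-006}). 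Collecting the surviving powers of $\frac{\Omega}{4m}$ against the $\frac{1}{2p^{s+1}}$ factors leaves exactly the prefactors $\frac{2m}{\Omega}$ and $\frac{m}{\Omega}$ stated in the theorem.

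The main obstacle is this last multi-Bessel Gaussian integral: one must justify the term-by-term integration of an $(N+1)$-fold Bessel product --- the Gaussian weight guarantees absolute convergence for $\Omega>0$, so Fubini/Tonelli applies --- and then correctly match the resulting multiple series, including the sign convention carried by the arguments $\underline{\alpha}$, to the Appendix definition of $\Psi_2$. The restriction $m\in\mathbb{Z}^{+}$ is essential only at this stage, since it is precisely what makes the Kummer--Laguerre reduction terminate into the finite $k$-sum; for non-integer $m$ the hypergeometric factor does not collapse and only the integral representation survives.
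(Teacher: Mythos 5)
Your proposal follows essentially the same route as the paper's proof in Appendix II: circular symmetry gives the Hankel-transform pair, the characteristic function factors into $\prod_{i=1}^{N} J_0(V_i\nu)$ times the Nakagami expectation ${}_1F_1\left(m;1;-\tfrac{\Omega}{4m}\nu^2\right)$, the Kummer--Laguerre identity ${}_1F_1(m;1;z)=\mathrm{e}^{z}L_{m-1}(-z)$ yields the finite $k$-sum for integer $m$, and each residual Gaussian-weighted multi-Bessel integral becomes (after the substitution $t\leftarrow\nu^2$) the $\Psi_2$ evaluation --- the only difference being that the paper cites this last step as the tabulated formula (\ref{eq_app_I-exo1-002}), whereas you re-derive it by term-by-term power-series integration justified via Fubini (valid, since the absolute series is dominated by $\prod_i I_0(V_i\nu)$, which the Gaussian weight controls). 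Your argument is correct, and your explicit caution about the sign convention in $\underline{\alpha}$ is well placed: a direct expansion of the $J_0$ series produces alternating signs that must be absorbed into the arguments of $\Psi_2$, a subtlety the paper's citation-based step glosses over.
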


\begin{proof}
  See Appendix II.
\end{proof}

Based on (\ref{q004-003}), the distribution of the SNR $\gamma$ is related to the envelope $R$ as expressed below
  \begin{equation}
    \begin{split}
    f_{\gamma}(x) &= \frac{1}{2\sqrt{\gamma_0 x}} f_{R}\left( \sqrt{\frac{x}{\gamma_0}} \right), \quad 
    F_{\gamma}(x) = F_{R}\left( \sqrt{\frac{x}{\gamma_0}} \right).
    \end{split}
    \label{q004-007}
\end{equation}
Then, the following corollary can be readily obtained by substituting Theorem 1 to (\ref{q004-007}).

\begin{corollary}
  Given positive integer valued parameter $m$, the PDF and CDF of the received SNR $\gamma$ for the MWGD model are respectively given by 
  \begin{equation}
    \begin{split}
      &f_{\gamma}(x) = 
      \frac{m}{\Omega \gamma_0} \sum_{k=0}^{m-1} \binom{m-1}{k} (-1)^{k} 
      ~\Psi_{2}\left( k+1; [1]_{N+1}; \frac{V_1^2 m}{\Omega}, \ldots, \frac{V_N^2 m}{\Omega}, \frac{x m}{\Omega \gamma_0} \right), \\
      &F_{\gamma}(x) = 
      \frac{m}{\Omega \gamma_0} \cdot t \cdot \sum_{k=0}^{m-1} \binom{m-1}{k} (-1)^{k} 
      ~\Psi_{2}\left( k+1; [1]_{N}, 2; \frac{V_1^2 m}{\Omega}, \ldots, \frac{V_N^2 m}{\Omega}, \frac{x m}{\Omega \gamma_0} \right).
    \end{split}
    \label{q004-008}
\end{equation}  
\end{corollary}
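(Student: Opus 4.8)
The plan is to obtain the corollary directly from Theorem 1 by the strictly monotone change of variable $\gamma = \gamma_0 R^2$ recorded in (\ref{q004-003}), which induces the standard density/distribution transformation (\ref{q004-007}). Since the map $R \mapsto \gamma_0 R^2$ is increasing on $[0,\infty)$, no absolute-value or multi-branch complications arise, so the entire argument reduces to careful bookkeeping of the Jacobian factor and of the arguments of the $\Psi_2$ series, both of which are already available in closed form from Theorem~1.

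First I would treat the PDF. Setting $r = \sqrt{x/\gamma_0}$ in (\ref{q004-004}) gives $r^2 = x/\gamma_0$, so the last entry of the argument vector $\underline{\alpha}(r)$ in (\ref{q004-006}) becomes $r^2 m/\Omega = xm/(\Omega\gamma_0)$, while the first $N$ entries $V_i^2 m/\Omega$ are left unchanged. The multiplicative prefactor then collapses cleanly: the leading $2(m/\Omega)\,r = 2(m/\Omega)\sqrt{x/\gamma_0}$ from (\ref{q004-004}), multiplied by the Jacobian $1/(2\sqrt{\gamma_0 x})$ in (\ref{q004-007}), yields exactly $m/(\Omega\gamma_0)$, reproducing the claimed expression for $f_{\gamma}(x)$.

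For the CDF the step is even shorter, since (\ref{q004-007}) involves no Jacobian: I would substitute $t = \sqrt{x/\gamma_0}$ into (\ref{q004-005}). The prefactor $(m/\Omega)\,t^2$ becomes $(m/\Omega)(x/\gamma_0) = (m/(\Omega\gamma_0))\,x$, and the same replacement $t^2 m/\Omega = xm/(\Omega\gamma_0)$ updates the final argument of the $\Psi_2$ series, leaving the order parameters $[1]_{N},2$ and the first $N$ arguments intact. This matches the stated $F_{\gamma}(x)$ once the leading symbol is read as the factor linear in $x$ produced by $t^2 = x/\gamma_0$.

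Because the result is an elementary consequence of a monotone substitution into results already proved in Theorem~1, there is no genuine analytic obstacle. The only point demanding care is the prefactor accounting: one must verify that the $\sqrt{x}$ coming from $r$ and the $1/\sqrt{x}$ coming from the Jacobian cancel in the PDF, so that $f_{\gamma}$ carries no residual fractional power of $x$, and confirm that the squared structure of the envelope variable ($r^2$ and $t^2$) is precisely what converts the quadratic-in-$R$ arguments of $\Psi_2$ into linear-in-$x$ arguments and simultaneously turns the $t^2$ in the CDF prefactor into a factor linear in $x$.
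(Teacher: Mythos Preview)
Your proposal is correct and follows exactly the approach the paper takes: the paper states immediately before the corollary that it is ``readily obtained by substituting Theorem~1 to (\ref{q004-007}),'' which is precisely the monotone change of variable $\gamma=\gamma_0 R^2$ you carry out. Your identification of the leading symbol in $F_{\gamma}(x)$ as the factor $x$ coming from $t^2=x/\gamma_0$ is also correct; the ``$t$'' in the displayed corollary is a typographical slip for $x$.
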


As a special case of the MWGD model, let us assume $m=1$ where the received signal $\tilde{V}$ is composed of $N$ specular components and a diffuse component whose magnitude follows Rayleigh distribution. This signal model was originally introduced in \cite{Durgin2002a} and recently revisited in \cite{Chai2009} where the authors obtained the distribution of the envelope in a series form by using Laguerre polynomial expansion and referred this model as the Multiple-Waves plus Diffuse Power (MWDP) fading. By using Theorem 1 and substituting $m=1$, we can easily obtain the PDF and CDF of the MWDP fading in closed form expressions, which are given below.

\begin{corollary}
For the MWDP model, the PDF and CDF of the received signal envelope $R$ are respectively given by 
  \begin{equation}
    \begin{split}
      f_R(r) &= \frac{2 r}{\Omega} \cdot \Psi_{2}\left( 1; [1]_{N+1}; \frac{V_1^2}{\Omega}, \ldots, \frac{V_N^2}{\Omega}, \frac{r^2}{\Omega} \right),\\
      F_R(t) &= \frac{t^2}{\Omega} \cdot \Psi_{2}\left( 1; [1]_{N}, 2; \frac{V_1^2}{\Omega}, \ldots, \frac{V_N^2}{\Omega}, \frac{t^2}{\Omega} \right).
    \end{split}
    \label{q004-009}
\end{equation}  
\end{corollary}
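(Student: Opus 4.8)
The plan is to obtain Corollary 2 as the immediate $m = 1$ specialization of Theorem 1, so that no fresh Hankel-transform evaluation or Laguerre expansion is required.

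First I would confirm that the MWDP model is exactly the MWGD model with $m = 1$. Setting $m = 1$ in the Nakagami-$m$ density of (\ref{q004-ext-000}) gives $f_{V_d}(x) = (2/\Omega)\,x\,\exp(-x^2/\Omega)$, which is precisely the Rayleigh law assumed for the diffuse term in (\ref{q004-001}) under the identification $\Omega = 2\sigma^2$. The two signal models therefore coincide, and the closed-form results (\ref{q004-004}) and (\ref{q004-005}) of Theorem 1 apply verbatim.

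Next I would substitute $m = 1$ into those expressions. The finite sum $\sum_{k=0}^{m-1}$ then contains only the index $k = 0$, whose term carries $\binom{0}{0} = 1$ and $(-1)^{0} = 1$; the whole summation thus collapses to the single Confluent Hypergeometric series $\Psi_{2}(1;\,\cdots\,)$. At the same time the prefactors $2m/\Omega$ and $m/\Omega$ reduce to $2/\Omega$ and $1/\Omega$, and the argument vector $\underline{\alpha}(x)$ of (\ref{q004-006}) loses its factor of $m$, becoming $\{V_1^2/\Omega, \ldots, V_N^2/\Omega, x^2/\Omega\}$. Assembling these pieces reproduces exactly the PDF and CDF stated in (\ref{q004-009}).

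Because every operation here is a routine specialization, I foresee no genuine obstacle; the only care needed is in tracking the $m$-factors so that the Rayleigh scaling in $\Omega$ stays consistent across both the prefactor and the entries of $\underline{\alpha}$. As an independent sanity check I would also note that with $m = 1$ the confluent hypergeometric kernel ${}_1F_1(1;1;-\tfrac{\Omega}{4}\nu^2)$ appearing in the integral form of Theorem 1 collapses, via the identity ${}_1F_1(1;1;z) = \mathrm{e}^{z}$, to the Gaussian factor $\exp(-\tfrac{\Omega}{4}\nu^2)$, thereby recovering the Rayleigh-diffuse characteristic function underlying Durgin's original integral representation in (\ref{q004-001}).
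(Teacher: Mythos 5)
Your proposal is correct and follows exactly the route the paper takes: the paper obtains Corollary 2 by substituting $m=1$ into Theorem 1, whereupon the sum $\sum_{k=0}^{m-1}$ collapses to the single $k=0$ term and $\underline{\alpha}(x)$ reduces to $\{V_1^2/\Omega,\ldots,V_N^2/\Omega,x^2/\Omega\}$, precisely as you describe. Your added checks --- that the $m=1$ Nakagami-$m$ law is Rayleigh with $\Omega=2\sigma^2$, and that ${}_1F_1(1;1;-\tfrac{\Omega}{4}\nu^2)=\exp(-\tfrac{\Omega}{4}\nu^2)$ recovers Durgin's characteristic-function kernel --- are sound confirmations consistent with the paper's remarks.
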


Next, let us consider the FMR model. For a given value of $\xi$, (\ref{q004-ext-001}) essentially becomes the MWGD model where the magnitude of the $i$-th specular component is $\sqrt{\xi}V_i$. By using the total probability, the distribution functions of the FMR model are derived in the following theorem.

\begin{thm}
  Let us consider the FMR fading model as described in (\ref{q004-ext-001}). The corresponding PDF and CDF of the received signal envelope $R$ are respectively given by 
  \begin{equation}
    \begin{split}
      f_R(r) &= r \int_{0}^{\infty} \Hypergeometric{1}{1}{m}{1}{-\frac{\Omega}{4m} \nu^2}
      \Psi_{2}\left( m_s; [1]_{N}; \underline{\beta}(\nu) \right) \cdot 
      \nu \BesselJ{0}{r \nu} \mathrm{d}\nu, \\
      F_R(t) &= t \int_{0}^{\infty} \Hypergeometric{1}{1}{m}{1}{-\frac{\Omega}{4m} \nu^2}
      \Psi_{2}\left( m_s; [1]_{N}; \underline{\beta}(\nu) \right) \cdot 
      \BesselJ{1}{\nu t} \mathrm{d}\nu,      
    \end{split}
    \label{q004-010}
\end{equation}
which holds for positive real valued $m$ and $m_s$ and $\underline{\beta}(\nu)$ denotes the following sequences
  \begin{equation}
    \begin{split}
      \underline{\beta}(\nu) \triangleq \left\{ \frac{V_1^2 \nu^2}{4 m_s}, \ldots, \frac{V_N^2 \nu^2}{4 m_s} \right\}.
    \end{split}
    \label{q004-011}
\end{equation}
\end{thm}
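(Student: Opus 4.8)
The plan is to exploit the hierarchical structure of the FMR model: conditioned on the fluctuation $\xi$, the model collapses to an MWGD model, so Theorem~1 applies, and the unconditional law is recovered by averaging over $\xi$. First I would fix $\xi=u$. Comparing (\ref{q004-ext-001}) with (\ref{q004-ext-000}), for this fixed value the received signal is an MWGD signal whose $i$-th specular magnitude is $\sqrt{u}\,V_i$ while the diffuse part is unchanged. Hence the integral (Hankel-transform) representation in the first line of (\ref{q004-004}) gives the conditional density directly, with $V_i$ replaced by $\sqrt{u}\,V_i$:
\[
f_{R\mid\xi}(r\mid u) = r\int_0^\infty {}_1F_1\!\left(m;1;-\tfrac{\Omega}{4m}\nu^2\right)\prod_{i=1}^N J_0\!\left(\sqrt{u}\,V_i\nu\right)\,\nu J_0(r\nu)\,d\nu .
\]

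Next I would de-condition using total probability, $f_R(r)=\int_0^\infty f_{R\mid\xi}(r\mid u)\,f_\xi(u)\,du$, and interchange the order of the $u$- and $\nu$-integrations so as to isolate the inner $\xi$-average $\int_0^\infty \prod_{i=1}^N J_0(\sqrt{u}\,V_i\nu)\,f_\xi(u)\,du$. The factor ${}_1F_1(m;1;-\tfrac{\Omega}{4m}\nu^2)\,\nu J_0(r\nu)$ does not depend on $u$ and simply rides along outside the inner integral.

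The crux is evaluating this inner average. I would insert the Maclaurin series $J_0(z)=\sum_{k\ge0}\frac{(-1)^k}{(k!)^2}(z/2)^{2k}$ for each factor, so that $\prod_i J_0(\sqrt{u}\,V_i\nu)$ becomes an $N$-fold power series in $u$ with coefficients in $V_i^2\nu^2/4$. Interchanging summation with the Gamma integral and using $\int_0^\infty u^{K+m_s-1}e^{-m_s u}\,du=\Gamma(K+m_s)/m_s^{K+m_s}$ with $K=k_1+\cdots+k_N$, the normalization $m_s^{m_s}/\Gamma(m_s)$ of $f_\xi$ converts $\Gamma(K+m_s)/\Gamma(m_s)$ into the Pochhammer symbol $(m_s)_K$, while the surviving power $m_s^{-K}$ rescales each argument to $V_i^2\nu^2/(4m_s)$. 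Matching the resulting multiple series term-by-term with the definition of $\Psi_2$ in (\ref{q004-eq_app_I-001}) identifies the inner average with $\Psi_2\!\left(m_s;[1]_N;\underline{\beta}(\nu)\right)$, which yields the claimed PDF in (\ref{q004-010}). The CDF then follows either by integrating $f_R$ over $[0,t]$ and using $\int_0^t rJ_0(\nu r)\,dr=\tfrac{t}{\nu}J_1(\nu t)$, or, equivalently, by running the same conditioning-and-averaging argument on the CDF integral of Theorem~1.

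The main obstacle is analytic rather than algebraic: justifying the interchange of integration orders, since the Hankel integral over $\nu$ is oscillatory and only conditionally convergent, and justifying the term-by-term integration of the Bessel series against the Gamma density. I expect to handle this by working with the absolutely convergent power-series form first (the Gamma moments furnish a dominating bound through the ratio test on $(m_s)_K\prod_i (V_i^2\nu^2/4m_s)^{k_i}/(k_i!)^2$), establishing the identity for the inner $\xi$-average as an honest function of $\nu$, and only then reinserting it into the outer Hankel integral. A secondary point to verify is the sign bookkeeping: the $(-1)^{k_i}$ arising from each $J_0$ must be absorbed exactly into the $\Psi_2$ convention of Appendix~I, so that the arguments appear with the signs recorded for $\underline{\beta}(\nu)$ in (\ref{q004-011}).
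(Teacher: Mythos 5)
Your proof follows the paper's Appendix III essentially verbatim: condition on $\xi=u$, identify the conditional law as the Theorem~1 (MWGD) Hankel representation with $V_i \mapsto \sqrt{u}\,V_i$, swap the $u$- and $\nu$-integrations, evaluate the inner Gamma-average as $\Psi_2\left(m_s;[1]_N;\cdot\right)$, and get the CDF from $\int_0^t r J_0(\nu r)\,\mathrm{d}r = \frac{t}{\nu}J_1(\nu t)$ --- the only difference being that where the paper cites the tabulated Laplace transform (\ref{eq_app_I-exo1-002}) (with $\mu_i=0$, $p=m_s$, $a_i=V_i^2\nu^2/4$, so that the prefactor $\Gamma(m_s)/m_s^{m_s}$ cancels the Gamma-density normalization), you re-derive that special case from the $J_0$ Maclaurin series with term-by-term integration, which is a legitimate self-contained substitute. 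One remark on the ``sign bookkeeping'' you deferred: carried out honestly, your series produces $(m_s)_K\prod_i\frac{(-1)^{k_i}}{(k_i!)^2}\bigl(V_i^2\nu^2/(4m_s)\bigr)^{k_i}$, i.e.\ $\Psi_2$ with arguments $-V_i^2\nu^2/(4m_s)$ under the definition (\ref{q004-eq_app_I-001}); the minus signs cannot be ``absorbed,'' because the standard form of the cited integral indeed carries $-a_i/p$, so the positive signs printed in (\ref{eq_app_I-exo1-002}) and hence in (\ref{q004-011}) are a transcription slip in the paper (as the $N=0$, $m=1$ Rayleigh special case confirms, since $\Psi_2(1;1;x)=\mathrm{e}^{x}$ would otherwise diverge), not a defect in your route.
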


\begin{proof}
  See Appendix III.
\end{proof}

\begin{remark}
  Although the confluent Hypergeometric series $\Psi_{2}\left(-;-;-\right)$ is defined as $n$-fold infinite summation in (\ref{q004-eq_app_I-001}), the following identities can be utilized to efficiently evaluate $\Psi_{2}\left(-;-;-\right)$
  \begin{equation}
    \begin{split}
    \Psi_2\left( k; [1]_{n}; \frac{a_1}{p}, \ldots, \frac{a_n}{p} \right)
    &\stackrel{(b_1)}{=} 
    \frac{1}{\Gamma(k)} 
    \int_{0}^{\infty} t^{k-1} \mathrm{e}^{-t} \cdot \prod_{i=1}^{n} \BesselJ{0}{2 \sqrt{{a_i t}/{p}}} \mathrm{d}t\\ 
    &\stackrel{(b_2)}{\approx} 
    \frac{1}{\Gamma(k)} 
    \sum_{k=1}^{M} w_k g\left( x_k \right) +R_M, \quad 
    g(x) \triangleq x^{k-1} \prod_{i=1}^{n} \BesselJ{0}{2 \sqrt{{a_i t}/{p}}},
    \end{split}
    \label{q004-012}
\end{equation}
where we used (\ref{eq_app_I-exo1-002}) in ($b_1$), applied (\ref{eq_app_I-002-a-ext9}) to ($b_2$) with $x_k$ and $w_k$representing the $k$-th abscissa and weight of the $M$-th order Laguerre polynomial, respectively. Since the approximation error $R_M$ converges rapidly to zero \cite{Gradshteyn1994}, the expression ($b_2$) achieves numerically accurate results.
\end{remark}

\subsection{Series Form Expressions}


In the previous subsection, we derived the distribution functions of the envelope for both the MWGD and FMR model to closed form expressions that commonly involve the confluent Hypergeometric series of multiple variables. As we described in Remark 1, $\Psi_{2}\left(-;-;-\right)$ can be numerically evaluated either via a one-dimensional integration or a finite summation. However, the numerical evaluation of the distribution functions might be time consuming if the integrand of $\Psi_{2}\left(-;-;-\right)$ has a heavy tail or it might be prone to computational errors due to the rapid fluctuation of the Bessel function and its product term $\prod_{i=1}^{N} \BesselJ{0}{V_i \nu}$. Furthermore, $\Psi_{2}\left(-;-;-\right)$ may be inappropriate for analytic studies due to its intractability. To resolve these issues, we exploit the Laguerre polynomial expansion \cite{Abdi2000,Chai2009} and derive the series form expressions of the distribution functions for both the MWGD and FMR model in the following theorem.

\begin{thm}
	The distribution functions of the signal envelope $R$ for both MWGD and FMR model can be commonly expressed in series form as below
\begin{equation}
	\begin{split}
		f_R(r) &= 2 \epsilon r \exp\left( -\epsilon r^2 \right) \cdot \sum_{n=0}^{\infty} C_n L_n\left( \epsilon r^2 \right),\\
		F_R(t) &= \sum_{n=0}^{\infty} C_n ~\sum_{k=0}^{n} \frac{(-1)^k}{k!} \binom{n}{k} \gamma\left( k+1, \epsilon t^2\right),
	\end{split}
	\label{q004-IIIB-001}
\end{equation}
where $L_n(x)$ is the Laguerre polynomial, $\gamma(s, x)$ is the lower incomplete gamma function, the positive, real-valued constant $\epsilon$
and the coefficient $C_n$ are respectively defined as below
\begin{equation}
	\begin{split}
	\epsilon = \frac{1}{\sum_{i=1}^{N} V_i^2 + \Omega}, \quad 
	C_n &= \mathbb{E}_{R}\left[ L_n(\epsilon r^2) \right] 
	= \sum_{k=0}^{n} \frac{\left( -\epsilon \right)^k}{k!} \binom{n}{k} \mathbb{E}_{R}[r^{2k}].
	\end{split}
	\label{q004-IIIB-003}
\end{equation}
Then moments $\mathbb{E}_{R}[r^{2k}]$ of the envelope $R$ can be evaluated by the following recursion formula 
\begin{equation}
	\begin{split}
	\mathbb{E}_{R}[r^{2k}] \triangleq u_{N+1}^{(2k)}, \quad 
	u_{j}^{(2k)} = \sum_{i=0}^{k} \binom{k}{i}^2 u_{j-1}^{(2i)} {\nu}_{j}^{(2k-2i)} \quad 
	\text{for } j = 2, \ldots, N+1, 
	\end{split}
	\label{q004-IIIB-004}
\end{equation}
where the initial value is $u_{1}^{(2k)} = \nu_{1}^{(2k)}$, $(x)_{n} = {\Gamma(x+n)}/{\Gamma(x)}$ is the Pochhammer symbol and
\begin{equation}
	\begin{split}
		\nu_{j}^{(2k)} = 
		\begin{dcases}
		V_j^{2k} \hfill &\text{for MWGD model}, \quad j = 1, \ldots, N,\\
		V_j^{2k} \cdot \frac{\left(m_s\right)_{k}}{m_s^{k}} \hfill &\text{for FMR model}, \quad j = 1, \ldots, N,\\
		\left( m \right)_k \left( \frac{\Omega}{m}\right)^k \hfill &\text{for } j = N+1.
		\end{dcases}
	\end{split}
	\label{q004-IIIB-005}
\end{equation}
\end{thm}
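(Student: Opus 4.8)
The plan is to prove the statement in four movements: a Laguerre expansion of the density, term-by-term integration for the CDF, a phase-averaging recursion for the even moments, and the evaluation of the per-component moments. First I would invoke completeness and orthonormality of the Laguerre polynomials with respect to the weight $e^{-x}$ on $[0,\infty)$, namely $\int_0^\infty L_n(x)L_{n'}(x)e^{-x}\,dx=\delta_{nn'}$. Writing $x=\epsilon r^2$ and factoring the target density as $f_R(r)=2\epsilon r\,e^{-\epsilon r^2}g(\epsilon r^2)$, the expansion $g=\sum_n C_n L_n$ has coefficients fixed by orthogonality; after changing variables back the weight $2\epsilon r\,e^{-\epsilon r^2}$ cancels exactly, leaving $C_n=\int_0^\infty f_R(r)L_n(\epsilon r^2)\,dr=\mathbb{E}_R[L_n(\epsilon R^2)]$. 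Expanding $L_n(x)=\sum_{k=0}^n\binom{n}{k}\frac{(-1)^k}{k!}x^k$ then gives the moment form of $C_n$ in (\ref{q004-IIIB-003}). I would fix $\epsilon$ by demanding $C_1=0$: since $L_1(x)=1-x$, this forces $\epsilon=1/\mathbb{E}_R[R^2]=1/(\sum_i V_i^2+\Omega)$, which matches (\ref{q004-IIIB-003}) and accelerates convergence (the role of the normalization in \cite{Abdi2000,Chai2009}).

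For the CDF I would integrate the density series term by term, substitute $x=\epsilon r^2$, and use the monomial expansion of $L_n$ together with $\int_0^{y}x^k e^{-x}\,dx=\gamma(k+1,y)$ to obtain $F_R(t)=\sum_n C_n\sum_{k=0}^n\frac{(-1)^k}{k!}\binom{n}{k}\gamma(k+1,\epsilon t^2)$, exactly the second line of (\ref{q004-IIIB-001}).

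The heart of the argument is the moment recursion. I would assemble $\tilde V$ one term at a time: set $Z_0=0$ and $Z_j=Z_{j-1}+W_j$, where $W_j$ is the $j$-th specular component for $j\le N$ and $W_{N+1}=\tilde V_{dif}$, so that $R=|Z_{N+1}|$ and $u_j^{(2k)}:=\mathbb{E}[|Z_j|^{2k}]$. Writing $|Z_{j-1}+W_j|^{2k}=(Z_{j-1}+W_j)^k(\overline{Z_{j-1}}+\overline{W_j})^k$ and expanding both binomials, every mixed term carries a factor $e^{j(q-p)\psi_j}$ in the uniform phase $\psi_j$ of $W_j$; averaging over $\psi_j$ annihilates all terms with $p\ne q$, collapsing the double sum to $\mathbb{E}_{\psi_j}|Z_{j-1}+W_j|^{2k}=\sum_{i=0}^k\binom{k}{i}^2|Z_{j-1}|^{2i}|W_j|^{2(k-i)}$. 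Taking the remaining expectation and using independence of $Z_{j-1}$ and $|W_j|$ yields (\ref{q004-IIIB-004}) with $\nu_j^{(2m)}=\mathbb{E}[|W_j|^{2m}]$ and initialization $u_1^{(2k)}=\nu_1^{(2k)}$. The per-component moments then follow directly: $|W_j|=V_j$ is deterministic for the MWGD specular terms, giving $V_j^{2k}$; a single Nakagami-$m$ integral gives $\mathbb{E}[V_d^{2k}]=(m)_k(\Omega/m)^k$ for the diffuse term; and for the FMR specular terms I would condition on $\xi$, run the same recursion with magnitudes $\sqrt{\xi}\,V_j$, and average against $f_\xi$, where the Gamma moment $\mathbb{E}[\xi^k]=(m_s)_k/m_s^k$ enters.

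The step I expect to be the main obstacle is precisely this last averaging for the FMR model: the single fluctuation $\xi$ is shared across all $N$ specular components, so after unfolding the conditional recursion one confronts a single power $\xi^{k_1+\cdots+k_N}$ rather than a product of independent factors, and I would have to verify carefully how the per-component factor $(m_s)_k/m_s^k$ in (\ref{q004-IIIB-005}) reproduces $\mathbb{E}[\xi^{\sum_j k_j}]$ — cross-checking the claim against the integral representation of Theorem 2, which encodes exactly $\mathbb{E}_\xi\!\big[\prod_i \BesselJ{0}{\sqrt{\xi}\,V_i\nu}\big]$. A secondary technical point is justifying the interchange of summation and integration in the CDF step and the convergence of the Laguerre series, for which I would lean on the completeness results cited in \cite{Abdi2000,Chai2009}.
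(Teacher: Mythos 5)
Your expansion and CDF steps are correct and coincide with the paper's route, except that you are more self-contained: the paper imports the Laguerre-series representation and the choice $\epsilon=\left(\sum_{i=1}^{N}V_i^2+\Omega\right)^{-1}$ from \cite{Abdi2000,Chai2009} and the moment recursion from \cite{Goldman1972}, whereas you actually prove all three (orthonormality for the coefficients, the $C_1=0$ normalization pinning down $\epsilon$, and the uniform-phase annihilation of the $p\neq q$ cross terms yielding the $\binom{k}{i}^2$ recursion). Your term-by-term integration producing $\gamma(k+1,\epsilon t^2)$ is the paper's identity (\ref{eq_app_I-001-b}). For the MWGD model and for the diffuse moment $(m)_k(\Omega/m)^k$ your argument is complete and matches Appendix IV.

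The obstacle you flagged in the FMR step is, however, a genuine gap, and the verification you postponed in fact fails --- the theorem cannot be proved as stated. Since the single fluctuation $\xi$ in (\ref{q004-ext-001}) multiplies every specular ray, the magnitudes $\sqrt{\xi}V_1,\ldots,\sqrt{\xi}V_N$ are not independent, and the independence hypothesis of your recursion breaks at every step $j\le N$: $|Z_{j-1}|$ and $\sqrt{\xi}V_j$ are correlated through $\xi$. Unfolding the recursion, the true even moment carries the joint factor $\mathbb{E}\left[\xi^{k_1+\cdots+k_N}\right]=(m_s)_{k_1+\cdots+k_N}/m_s^{k_1+\cdots+k_N}$, while (\ref{q004-IIIB-005}) produces $\prod_{j}(m_s)_{k_j}/m_s^{k_j}$; already for $N=2$ and $k_1=k_2=1$ these are $(m_s+1)/m_s$ versus $1$. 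Notably, the paper's own proof commits exactly this error: in (\ref{q004-eq_app_IV-004}) the average over $\xi$ is taken separately inside each $\nu_i^{(2k)}$, which silently replaces the common $\xi$ by i.i.d.\ per-ray copies. It is also inconsistent with Theorem 2, whose $\Psi_{2}\left( m_s; [1]_{N}; \underline{\beta}(\nu) \right)$ couples the rays through the single Pochhammer symbol $(m_s)_{m_1+\cdots+m_N}$ arising from $\mathbb{E}_{\xi}\left[\prod_{i} J_0\left(\sqrt{\xi}V_i\nu\right)\right]$ --- so the cross-check you proposed against Theorem 2 is precisely the right one, and it refutes (\ref{q004-IIIB-005}) rather than confirming it. To repair the FMR case, condition on $\xi$, run your recursion on the conditional moments (the conditional $u_j^{(2k)}$ is a polynomial in $\xi$ whose coefficients can be tracked by a two-index recursion), and apply $\mathbb{E}[\xi^p]=(m_s)_p/m_s^p$ only at the final step; alternatively, the statement is correct as written if the model is redefined with independent per-ray fluctuations $\xi_1,\ldots,\xi_N$, or in the degenerate cases $N=1$ and $m_s\rightarrow\infty$.
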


\begin{proof}
  See Appendix IV.
\end{proof}

We note that the series expression of the distributions for the MWGD and FMR model have an identical form in (\ref{q004-IIIB-001}), but the difference of each model is reflected through the coefficient $C_n$. The corresponding PDF and CDF of the received SNR $\gamma$ can also be expressed in series form by substituting Theorem 3 to (\ref{q004-007}) as given in the following corollary.

\begin{corollary}
  The PDF and CDF of the received SNR $\gamma$ are respectively given by 
  \begin{equation}
    \begin{split}
		f_{\gamma}(x) &= \frac{\epsilon}{\gamma_0} \exp\left( -\frac{\epsilon x}{\gamma_0} \right) 
		\cdot \sum_{n=0}^{\infty} C_n L_n\left( \frac{\epsilon x}{\gamma_0} \right),\\
		F_{\gamma}(x) &= \sum_{n=0}^{\infty} C_n ~\sum_{k=0}^{n} \frac{(-1)^k}{k!} \binom{n}{k} \gamma\left( k+1, \frac{\epsilon x}{\gamma_0}\right).
    \end{split}
    \label{q004-IIIB-006}
\end{equation}  
\end{corollary}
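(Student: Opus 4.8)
The plan is to derive both identities in (\ref{q004-IIIB-006}) by direct substitution of the series expressions furnished by Theorem 3 into the change-of-variables relations (\ref{q004-007}), which tie the distribution of the SNR $\gamma = \gamma_0 R^2$ to that of the envelope $R$. Since $\gamma_0 > 0$ and $R \geq 0$, the map $r \mapsto \gamma_0 r^2$ is a strictly increasing bijection of $[0,\infty)$ onto itself, so the standard monotone transformation formula applies and (\ref{q004-007}) holds verbatim; no additional regularity argument is needed. Thus the entire corollary reduces to a substitution of variables followed by an algebraic simplification of the Jacobian prefactor.

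For the CDF, I would begin from $F_{\gamma}(x) = F_{R}(\sqrt{x/\gamma_0})$ and set $t = \sqrt{x/\gamma_0}$, so that $\epsilon t^2 = \epsilon x/\gamma_0$. Inserting this into the series for $F_R(t)$ in (\ref{q004-IIIB-001}) reproduces the claimed expression for $F_{\gamma}(x)$ term by term: the variable substitution affects only the argument of the lower incomplete gamma function $\gamma(k+1,\cdot)$, while the coefficients $C_n$ and the inner finite sum over $k$ carry over unchanged. This immediately yields the second line of (\ref{q004-IIIB-006}).

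For the PDF, I would write $f_{\gamma}(x) = \frac{1}{2\sqrt{\gamma_0 x}} f_{R}(\sqrt{x/\gamma_0})$ and again set $r = \sqrt{x/\gamma_0}$, whence $\epsilon r^2 = \epsilon x/\gamma_0$. The one step requiring (trivial) care is the simplification of the prefactor: combining the Jacobian $\frac{1}{2\sqrt{\gamma_0 x}}$ with the factor $2\epsilon r = 2\epsilon\sqrt{x/\gamma_0}$ that arises from the envelope PDF in (\ref{q004-IIIB-001}) collapses the radicals to give exactly $\epsilon/\gamma_0$, while $\exp(-\epsilon r^2)$ and $L_n(\epsilon r^2)$ become $\exp(-\epsilon x/\gamma_0)$ and $L_n(\epsilon x/\gamma_0)$, respectively. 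This produces the first line of (\ref{q004-IIIB-006}). The computation is mechanical once Theorem 3 is available, so there is no substantial obstacle; the only subtlety worth noting is the implicit interchange of the substitution with the infinite summation, which is immediate because the transformation acts on each summand solely through the single argument $\epsilon r^2$ and therefore preserves the convergence already established in Theorem 3.
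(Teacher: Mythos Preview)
Your proposal is correct and follows exactly the paper's own route: the paper states that the corollary is obtained ``by substituting Theorem 3 to (\ref{q004-007})'', which is precisely the monotone change-of-variables argument you carry out. The only difference is that you make the Jacobian simplification and the termwise substitution explicit, whereas the paper leaves these mechanical steps to the reader.
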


\begin{remark}
	We can verify that the integral of the PDF in (\ref{q004-IIIB-001}) over $\{0, \infty\}$ is equal to one 
	\begin{equation}
	    \begin{split}
	    \lim_{t \rightarrow \infty} \int_{0}^{t} f_R(r) \mathrm{d}r &= 
	    \sum_{n=0}^{\infty} C_n ~\sum_{k=0}^{n} \frac{(-1)^k}{k!} \binom{n}{k} \Gamma\left( k+1 \right)
	    = C_0 = 1,
	    \end{split}
	    \label{q004-IIIB-007}
	\end{equation}  	
	where the first equality follows by applying the CDF form in (\ref{q004-IIIB-001}) and $\lim_{t \rightarrow \infty} \gamma\left( k, t\right) = \Gamma(k)$, the second equality is due to the following binomial equality 
	\begin{equation}
		\begin{split}
			\sum_{k=0}^{n} (-1)^k \binom{n}{k} = 
			\begin{dcases}
			1 \hfill &\text{for }n = 0,\\
			0 \hfill &\text{otherwise},
			\end{dcases}
		\end{split}
		\label{q004-IIIB-008}
	\end{equation}	
	and the last equality follows by using $L_0(x) = 1$ and $C_0 = \mathbb{E}_{R}\left[ L_0(\epsilon r^2)\right] = 1$. Hence, the series expression of the PDF in (\ref{q004-IIIB-001}) is a valid PDF. 
\end{remark}

\begin{remark}
	For numerical evaluation, we use the truncated form of (\ref{q004-IIIB-001}) where we evaluate only up to the $M$-th term and ignore the higher order terms as follows
	\begin{equation}
		\begin{split}
			{\widehat{f}_R}(r) &= 2 \epsilon r \exp\left( -\epsilon r^2 \right) \cdot \sum_{n=0}^{M} C_n L_n\left( \epsilon r^2 \right).
		\end{split}
		\label{q004-IIIB-009}
	\end{equation}
The approximation error caused by the truncation is upper-bounded \cite{Chai2009} as follows
	\begin{equation}
		\begin{split}
		e_M = \int_{0}^{\infty} \left( f_R(r) - {\widehat{f}_R}(r) \right)^2 \mathrm{d}r \leq 
		\left( \epsilon \pi \right)^{\frac{1}{2}} \left[\sum_{n=M+1}^{\infty} |C_n|\right]^{2},
		\end{split}
		\label{q004-IIIB-010}
	\end{equation}
where $f_R(r)$ is the exact PDF and ${\widehat{f}_R}(r)$ is the truncated PDF. The upper bound converges to zero as $M$ goes to infinity, thereby the approximation error converges to zero as well.
	\begin{equation}
		\begin{split}
		\lim_{M \rightarrow \infty} \left[\sum_{n=M+1}^{\infty} |C_n|\right]^{2} = 0 ~\Leftrightarrow~
		\lim_{M \rightarrow \infty} e_M = 0.
		\end{split}
		\label{q004-IIIB-011}
	\end{equation}
Hence, (\ref{q004-IIIB-009}) provides a tight approximation to the exact PDF of the signal envelope $R$.
\end{remark}

\begin{remark}
	Theorem 3 achieves a series representation of the confluent Hypergeometric series $\Psi_{2}\left(-;-;-\right)$. For example, the series expansion of the hypergeometric series in (\ref{q004-009}) can be easily obtained by comparing (\ref{q004-009}) to (\ref{q004-IIIB-001}), which achieves the following identity
	\begin{equation}
		\begin{split}
		\Psi_{2}\left( 1; [1]_{N+1}; \frac{V_1^2}{\Omega}, \ldots, \frac{V_N^2}{\Omega}, \frac{r^2}{\Omega} \right) 
		 &= \frac{\epsilon}{\Omega}\exp\left( -\epsilon r^2 \right) \cdot \sum_{n=0}^{\infty} C_n L_n\left( \epsilon r^2 \right).\\
		\end{split}
		\label{q004-IIIB-012}
	\end{equation}
	The functional identity in (\ref{q004-IIIB-012}) essentially reduces $N+1$ fold infinite summation in (\ref{q004-eq_app_I-001}) to a single summation. To the best of the author's knowledge, this is a new result. 
\end{remark}



\section{Performance Analysis of Wireless Communications System}

In this section, we use the series form of the distribution functions to derive important statistics of the proposed MWGD and FMR model and evaluate system performance metrics. 

\subsection{Statistics of the Proposed Fading Models}

\subsubsection{Moments}
The $l$-th order moment of the SNR $\gamma$ can be obtained as follows
\begin{equation}
	\begin{split}
	\mathbb{E}[\gamma^l] &= \frac{1}{\bar{\gamma}} \sum_{n=0}^{\infty} C_n
	\int_{0}^{\infty} x^l \exp\left(-\frac{x}{\bar{\gamma}}\right)
		L_n\left( \frac{x}{\bar{\gamma}} \right) \mathrm{d}x
		= \bar{\gamma}^l \sum_{n=0}^{\infty} C_n 
	\int_{0}^{\infty} t^l \mathrm{e}^{-t} L_n\left(t\right) \mathrm{d}t\\
	&= \bar{\gamma}^l \sum_{n=0}^{\infty} C_n \cdot 
	\Hypergeometric{2}{1}{-n, l+1}{1}{1}
	= \bar{\gamma}^l \sum_{n=0}^{l} \left(-1\right)^n \binom{l}{n} C_n,
	\quad \text{for } l \in \mathbb{Z}^{+},
	\end{split}
	\label{q004-IV-001}
\end{equation}
where we used (\ref{q004-IIIB-006}) in the first equality, replaced the term $\frac{\gamma_0}{\epsilon}$ to $\bar{\gamma} = \gamma_0 \left(\sum_{i=1}^{N} V_i^2 + \Omega \right)$ based on (\ref{q004-IIIB-003}), applied a change of variable, \textit{i.e.}, $\frac{x}{\bar{\gamma}} = t$, and \cite[7.414-7]{Gradshteyn1994} in the third equality. 
In the last equality, we used the following functional identity for integer valued index $l$ and $n$
\begin{equation}
	\begin{split}
		\Hypergeometric{2}{1}{-n, l+1}{1}{1} = 
		\begin{dcases}
		\binom{l}{n}(-1)^n \hfill &\text{for } 0 \leq n \leq l,\\
		0 \hfill &\text{for }n > l.
		\end{dcases}
	\end{split}
	\label{q004-IV-002}
\end{equation}
For example, the first and second moments of the SNR are given by 
\begin{equation}
	\begin{split}
	\mathbb{E}[\gamma] &= \bar{\gamma} \left( C_0 - C_1 \right),\quad 
	\mathbb{E}[\gamma^2] = \bar{\gamma}^2 \left( C_0 - 2 C_1 +C_2 \right),
	\end{split}
	\label{q004-IV-003}
\end{equation}
which can be further simplified by directly evaluating the coefficient $C_0$ and $C_1$ as follows
\begin{equation}
	\begin{split}
	&C_0 = \mathbb{E}_{R}\left[ L_0(\epsilon r^2)\right] = 1 \quad \text{since $L_0(x) = 1$ for arbitrary $x$},\\	
	&C_1 = 1 - \epsilon \mathbb{E}_{R}\left[ r^2\right] = 1 - \epsilon \left(\sum_{i=1}^{N} V_i^2 + \Omega\right) = 0
	\quad \text{due to (\ref{q004-IIIB-003})}.
	\end{split}
	\label{q004-IV-004}
\end{equation}
Then, (\ref{q004-IV-003}) can be written as follows
\begin{equation}
	\begin{split}
	\mathbb{E}[\gamma] &= \bar{\gamma},\quad 
	\mathbb{E}[\gamma^2] = \bar{\gamma}^2 \left( 1 + C_2 \right), \quad 
	\mathrm{Var}[\gamma] = \mathbb{E}[\gamma^2] - \mathbb{E}[\gamma]^2 = C_2 \cdot \bar{\gamma}^2.
	\end{split}
	\label{q004-IV-005}
\end{equation}

In the following subsection, we will consider two important metrics to assess the performance of the wireless communications systems over fading channels; namely, Amount of Fading (AF) \cite{Simon2005} and Channel Quality Estimation Index (CQEI) \cite{Lioumpas2007}, and use (\ref{q004-IV-005}) to simplify the expressions. 

\subsubsection{Amount of Fading (AF) and Channel Quality Estimation Index (CQEI)}

The AF is defined as the ratio of the variance of the SNR to the squared average SNR \cite{Simon2004}, whereas the CQEI is defined as the ratio of the variance of the SNR to the cubed average SNR \cite{Lioumpas2007}. Both metrics are often used to quantify the severity of fading experienced during transmission. Since these metrics takes into account the higher order moments of the SNR, the AF and CQEI can efficiently quantify the error performance of the wireless communications systems, especially in diversity systems. By using (\ref{q004-IV-005}), the AF and CQEI for the proposed fading model can be written as below
\begin{equation}
	\begin{split}
	&\mathrm{AF} = \frac{\mathrm{Var}[\gamma]}{\mathbb{E}[\gamma]^2} 
	= \frac{\mathbb{E}[\gamma^2] - \mathbb{E}[\gamma]^2}{\mathbb{E}[\gamma]^2} = C_2,\quad 
	\mathrm{CQEI} = \frac{\mathrm{Var}[\gamma]}{\mathbb{E}[\gamma]^3}  = \frac{\mathrm{AF}}{\mathbb{E}[\gamma]}
	= \frac{C_2}{\bar{\gamma}}.
	\end{split}
	\label{q004-IV-006}
\end{equation}

\subsubsection{Moment Generating Function (MGF)}

The MGF of the SNR is defined as follows \cite{DiRenzo2010}
\begin{equation}
	\begin{split}
	\mathcal{M}_{\gamma}(s) = \mathbb{E}[\mathrm{e}^{-s \gamma}] = \int_{0}^{\infty} \mathrm{e}^{-s x} f_{\gamma}(x) \mathrm{d}x.
	\end{split}
	\label{q004-IV-007}
\end{equation}
By substituting the series expression of the PDF from (\ref{q004-IIIB-006}) and \cite[7.414]{Gradshteyn1994} to (\ref{q004-IV-007}), the MGF of the SNR can be readily obtained as follows
\begin{equation}
	\begin{split}
	\mathcal{M}_{\gamma}(s) = \frac{1}{\left( 1 + s \bar{\gamma} \right)} \cdot \sum_{n=0}^{\infty} C_n \left( 
	\frac{s \bar{\gamma}}{1 + s \bar{\gamma}}
	\right)^n.
	\end{split}
	\label{q004-IV-008}
\end{equation}
An alternative expression of the MGF can be derived by using the higher order moments of the SNR given in (\ref{q004-IV-001}) and the Taylor series expression of the exponential function as given below
\begin{equation}
	\begin{split}
	\mathcal{M}_{\gamma}(s) = \sum_{n=0}^{\infty} \frac{\left( -s \right)^n}{n!}  \mathbb{E}[\gamma^n] 
	= \sum_{n=0}^{\infty} \frac{\left( - s \bar{\gamma} \right)^n}{n!} \left(
	\sum_{k=0}^{n} \left(-1\right)^k \binom{n}{k} C_k \right),
	\end{split}
	\label{q004-IV-009}
\end{equation}
which is beneficial for analytical evaluation of the system performance metrics due to its tractability. In the following subsection, we will use (\ref{q004-IV-009}) to calculate the ergodic capacity.

\subsection{System Performance Metrics}

\subsubsection{Ergodic Capacity}

The capacity of the proposed fading models can be obtained as follows 
\begin{equation}
	\begin{split}
		\mathrm{C} &= \mathbb{E}_{\gamma}\left[ \log_2(1+\gamma)\right] \quad (\text{bits/Sec/Hz}) \\
		&= \log_2 \mathrm{e} \sum_{n=0}^{\infty} C_n \sum_{k=0}^{n} \frac{\left(-1\right)^k}{k!} \binom{n}{k}
		\int_{0}^{\infty} t^k \mathrm{e}^{-t} \ln(1+ \bar{\gamma} t) \mathrm{d}t\\
		&= \log_2 \mathrm{e} \sum_{n=0}^{\infty} C_n \sum_{k=0}^{n} \frac{\left(-1\right)^k}{k!} \binom{n}{k}
		\sum_{l=0}^{k} \frac{k!}{(k-l)!}
		\left[ \frac{(-1)^{k-l-1}}{\bar{\gamma}^{k-l}} \mathrm{e}^{\frac{1}{\bar{\gamma}}}
		\mathrm{E}_i\left( -\frac{1}{\bar{\gamma}}\right) + \sum_{m=1}^{k-l} \frac{(m-1)!}{\left( -{\bar{\gamma}}\right)^{k-l-m}}
		\right],
	\end{split}
	\label{q004-IV-010}
\end{equation}
where we applied (\ref{q004-IIIB-006}), used a change of variable, \textit{i.e.},  $t \leftarrow \frac{x}{\bar{\gamma}}$, with (\ref{eq_app_I-001-a}) in the second equality. The last equality follows by \cite[4.337.5]{Gradshteyn1994}, where $\mathrm{E}_i(x)$ represents the exponential integral.

An alternative expression of the capacity can be obtained by using the MGF as follows
\begin{equation}
	\begin{split}
		\mathrm{C} &= \log_2 \mathrm{e} \cdot \int_{0}^{\infty} \mathrm{E}_i\left( -s \right) \cdot \frac{\partial \mathcal{M}_{\gamma}(s)}{\partial s} \mathrm{d}s
		= -\log_2 \mathrm{e} \cdot \int_{0}^{\infty} \mathrm{E}_i\left( -s \right) 
		\left(\sum_{n=0}^{\infty} \frac{\left( -s \right)^n}{n!}  \mathbb{E}[\gamma^{n+1}] \right) \mathrm{d}s\\
		&= \log_2 \mathrm{e} \cdot \sum_{n=0}^{\infty} \frac{\left( -1 \right)^n}{n!}  \mathbb{E}[\gamma^{n+1}]
		\int_{0}^{\infty} \mathrm{E}_1\left( s \right) s^n \mathrm{d}s
		= \log_2 \mathrm{e} \cdot \sum_{n=0}^{\infty} \frac{\left( -1 \right)^n}{n+1}  \mathbb{E}[\gamma^{n+1}]\\
		&= \log_2 \mathrm{e} \cdot \sum_{n=0}^{\infty} \frac{\left( -1 \right)^n \bar{\gamma}^{n+1}}{n+1} 
		 \left[\sum_{k=0}^{n+1} \left(-1\right)^k \binom{n+1}{k} C_k \right],
	\end{split}
	\label{q004-IV-011}
\end{equation}
where the first equality follows by \cite[eq.7]{DiRenzo2010}, the second equality is achieved by shifting the index $n$ in (\ref{q004-IV-009}) to incorporate the first order differentiation over $s$, the third equality is obtained by the property of the exponential integral, \textit{i.e.}, $\mathrm{E}_1(x) = -\mathrm{E}_i(-x)$ for any positive $x$, the fourth equality is due to \cite[6.223]{Gradshteyn1994} and the last equality follows by substituting (\ref{q004-IV-001}). Based on (\ref{q004-IV-011}), the asymptotic behavior of the capacity at low average SNR can be observed by ignoring the higher order terms $\bar{\gamma}^{n+1}$ ($n \geq 1$) and keeping only the dominant term ($n = 0$) in (\ref{q004-IV-011}) as follows
\begin{equation}
	\begin{split}
		\lim_{\bar{\gamma} \rightarrow 0}\mathrm{C} &= \log_2 \mathrm{e} \cdot \bar{\gamma},
	\end{split}
	\label{q004-IV-012}
\end{equation}
which accords with the expression derived in \cite[eq.12]{DiRenzo2010}. 

\subsubsection{Outage Probability and Outage Capacity}
The outage probability (or outage capacity) is defined as the probability that the instantaneous SNR (or instantaneous capacity) falls below a predefined threshold $\gamma_{th}$ (or $R_{th}$) \cite{Goldsmith2005}. By using (\ref{q004-IIIB-006}), the outage probability (or outage capacity) of the proposed fading model can be expressed as follows
  \begin{equation}
    \begin{split}
    \begin{dcases}
    	\text{Outage Probability: }	\quad &\mathbb{P}\left( \gamma \leq \gamma_{th} \right) = F_{\gamma}(\gamma_{th})\\
    	 &\qquad = \sum_{n=0}^{\infty} C_n ~\sum_{k=0}^{n} \frac{(-1)^k}{k!} \binom{n}{k} \gamma\left( k+1, \frac{\gamma_{th}}{\bar{\gamma}}\right),\\
    	\text{Outage Capacity: }	\quad &\mathbb{P}\left( \log_2(1+\gamma) \leq R_{th} \right) = F_{\gamma}\left(2^{R_{th}-1}\right)\\ 
    	&\qquad = \sum_{n=0}^{\infty} C_n ~\sum_{k=0}^{n} \frac{(-1)^k}{k!} \binom{n}{k} \gamma\left( k+1, \frac{2^{R_{th}-1}}{\bar{\gamma}}\right).
    \end{dcases}
    \end{split}
    \label{q004-IV-013}
\end{equation}  
By using (\ref{eq_app_I-exo1-003}), the asymptotic behavior of the outage probability (or outage capacity) at high average SNR can be obtained as follows
  \begin{equation}
    \begin{split}
    \bar{\gamma} \rightarrow \infty \Rightarrow
    \begin{dcases}
    	\text{Outage Probability: }	\quad 
    	&\sum_{n=0}^{\infty} C_n ~\sum_{k=0}^{n} \frac{(-1)^k}{(k+1)!} \binom{n}{k} \left(\frac{\gamma_{th}}{\bar{\gamma}}\right)^{k+1},\\
    	\text{Outage Capacity: }	\quad 
    	&\sum_{n=0}^{\infty} C_n ~\sum_{k=0}^{n} \frac{(-1)^k}{(k+1)!} \binom{n}{k} 
    	\left(\frac{2^{R_{th}-1}}{\bar{\gamma}}\right)^{k+1}.
    \end{dcases}
    \end{split}
    \label{q004-IV-014}
\end{equation}

\subsubsection{Average Bit Error Rate (BER)}

In \cite{Lopez-Martinez2010}, the generalized BER analysis for an arbitrary modulation scheme was introduced, where the instantaneous BER of a coherent detection and its first order differentiation at a given SNR $\gamma$ were expressed as
  \begin{equation}
    \begin{split}
    \mathbb{P}_{E}(\gamma) & = \sum_{r=1}^{R} \alpha_r Q\left( \sqrt{\beta_r \gamma }\right), \quad 
    \frac{\partial \mathbb{P}_{E}(\gamma)}{\partial \gamma} = 
    -\sum_{r=1}^{R} \alpha_r \sqrt{\frac{\beta_r}{8\pi \gamma}} \exp\left( -\frac{\beta_r \gamma}{2}\right),
    \end{split}
    \label{q004-IV-015}
\end{equation}  
where $Q(-)$ is the Gauss Q-function and $\{ \alpha_r, \beta_r \}_{r=1}^{R}$ are the modulation-dependent constants. By using (\ref{q004-IV-015}) and (\ref{q004-IIIB-006}), the average BER of the proposed fading models can be obtained as follows
  \begin{equation}
    \begin{split}
    \mathbb{E}_{\gamma}\left[\mathbb{P}_{E}(\gamma)\right] &= 
    \int_{0}^{\infty} \mathbb{P}_{E}(x) f_{\gamma}(x)\mathrm{d}x 
    = -\int_{0}^{\infty} \frac{\partial \mathbb{P}_{E}(x)}{\partial x} \cdot F_{\gamma}(x)\mathrm{d}x \\
    &= \sum_{r=1}^{R} \alpha_r \sqrt{\frac{\beta_r}{8\pi}} 
    \sum_{n=0}^{\infty} C_n \sum_{k=0}^{n} \frac{(-1)^k}{k!} \binom{n}{k} \cdot 
    \int_{0}^{\infty} x^{-\frac{1}{2}} \exp\left( -\frac{\beta_r x}{2}\right)
    \gamma\left( k+1, \frac{x}{\bar{\gamma}}\right) \mathrm{d}x\\
    &= \sum_{r=1}^{R} \alpha_r \sqrt{\frac{\beta_r \bar{\gamma}}{8\pi}} 
    \sum_{n=0}^{\infty} C_n \sum_{k=0}^{n} \frac{(-1)^k}{\left( k + \frac{3}{2}\right)_{\frac{1}{2}}} \binom{n}{k} 
    \cdot 
    \frac{\Hypergeometric{2}{1}{1, k+\frac{3}{2}}{k+1}{\frac{1}{1 + \frac{\beta_r \bar{\gamma}}{2}}}}{\left( 1 + \frac{\beta_r \bar{\gamma}}{2}\right)^{k+\frac{3}{2}}},
    \end{split}
    \label{q004-IV-016}
\end{equation} 
where we used integration by parts in the second equality, substituted (\ref{q004-IIIB-006}) in the third equality and applied \cite[6.455]{Gradshteyn1994} in the last equality. 



\section{Numerical Evaluation}

In this section, we present numerical evaluations of the MWGD and FMR fading models and compare the link performance over various fading parameters. All of the simulation were carried out using  MATLAB. Monte Carlo simulations have been carried out in order to validate the theoretical results, but they are not represented in these figures as the simulated values are identical to the theoretical results. We adopt a power parameter $K^{(N)}$ that is similar to the Rice factor and defined as the ratio of the power of the specular components over the power of the diffuse waves as follows \cite{Yu2007}
  \begin{equation}
    \begin{split}
    K^{(N)} = \frac{\sum_{i=1}^{N} V_i^2}{\Omega}.
    \end{split}
    \label{q004-V-001}
\end{equation}  

\begin{figure}[!t]
	\centering
	\includegraphics[width=\textwidth]{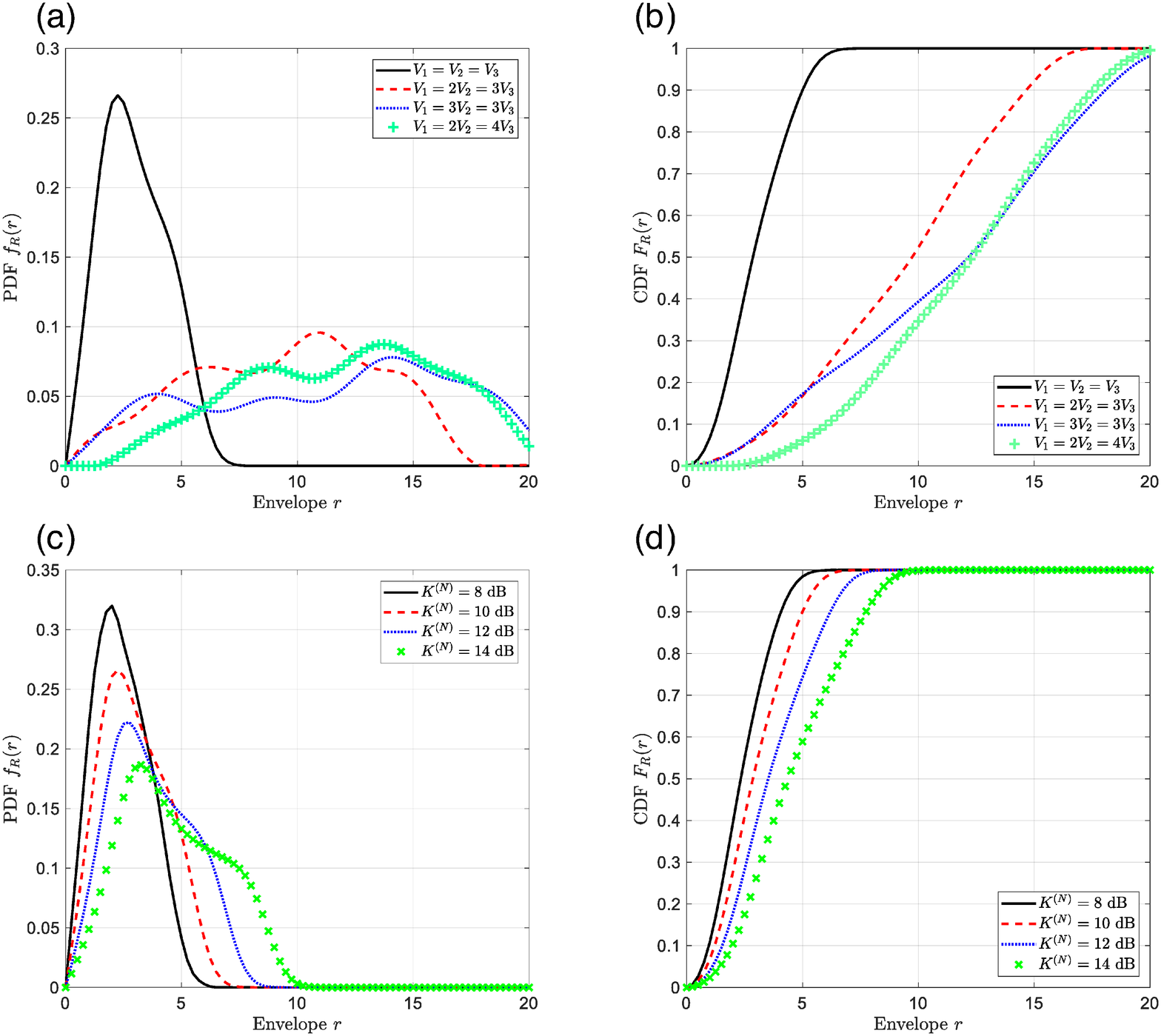}
	\caption{The PDF and CDF of the received signal envelope for MWGD fading model with $N = 3$ waves; (a) PDF for $K^{(N)} = 10$ dB, (b) CDF for $K^{(N)} = 10$ dB, (c) PDF for $V_1 = V_2 = V_3$, (d) CDF for $V_1 = V_2 = V_3$ case, respectively.}
	\label{fig:draft0_fig1}	
\end{figure}

\figref{fig:draft0_fig1} shows the PDF and CDF of the received signal envelope $R$ of the MWGD model for three specular waves ($N = 3$) scenario. In Figs. \ref{fig:draft0_fig1}(a)-(b), we fixed the power ratio at $K^{(N)} = 10$ dB, assumed $\Omega = m = 1$ for the diffuse waves and compared the link performance over various combinations of the specular components $(V_1, V_2, V_3)$. We observe that the PDF curves in \figref{fig:draft0_fig1}(a) follows an unimodal distribution for the case of $V_1 = V_2 = V_3$, whereas the other cases commonly achieve multi-modal distribution with two or three different modes. The PDF curves spread wider as the multimodality gets stronger, which cause the corresponding CDF curves in \figref{fig:draft0_fig1}(b) to be shifted toward a higher value of $R$, \textit{i.e.}, \textit{right-shifted}. In Figs. \ref{fig:draft0_fig1}(c)-(d), we assumed $V_1 = V_2 = V_3$, $\Omega = m = 1$ and compared the link performance over various power ratio $K^{(N)}$. We observe that the multimodality in the PDF curves becomes more evident as the ratio $K^{(N)}$ gets larger, causing the CDF curves to be right-shifted to a higher value of $R$. 

\begin{figure}[!t]
	\centering
	\includegraphics[width=\textwidth]{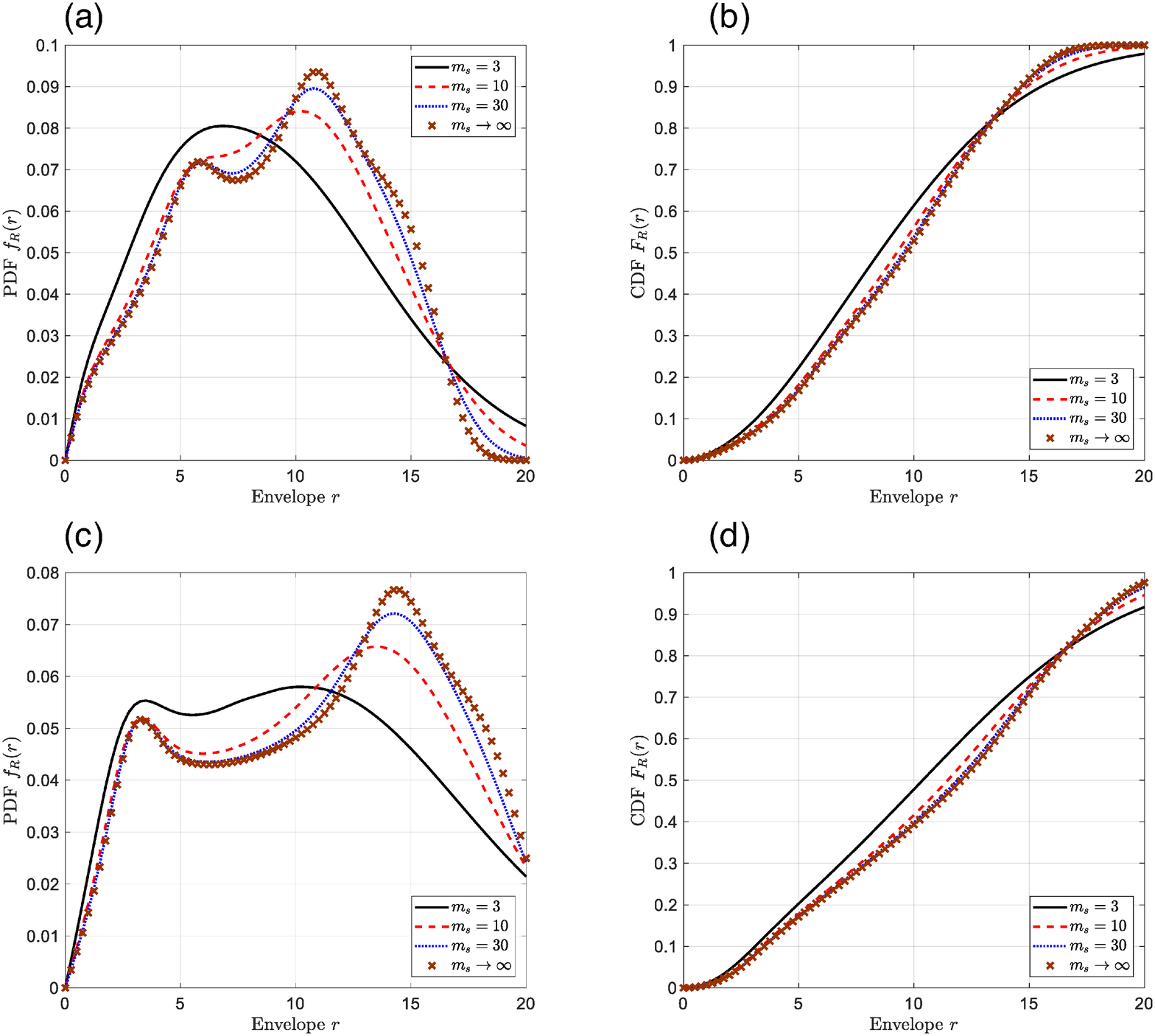}
	\caption{The PDF and CDF of the received signal envelope for FMR fading model with $N = 3$ waves; (a) PDF for $V_1 = 2 V_2 = 3 V_3$, (b) CDF for $V_1 = 2 V_2 = 3 V_3$, (c) PDF for $V_1 = 3 V_2 = 3 V_3$, (d) CDF for $V_1 = 3 V_2 = 3 V_3$ case, respectively.}
	\label{fig:draft0_fig2}	
\end{figure}

\figref{fig:draft0_fig2} shows the PDF and CDF of the received signal envelope $R$ of the FMR model for three specular waves ($N = 3$) scenario, where we assumed $K^{(N)} = 10$ dB and $\Omega = m = 1$. We considered two different combinations of the specular components $(V_1, V_2, V_3)$; In Figs. \ref{fig:draft0_fig2}(a)-(b), we considered the case of $V_1 = 2 V_2 = 3 V_3$, whereas in Figs. \ref{fig:draft0_fig2}(c)-(d), the case of $V_1 = 3 V_2 = 3 V_3$ is assumed. We observe that the multimodality in the distribution subsides as the $m_s$ parameter decreases and the PDF curves reduces to an unimodal distribution, \textit{e.g.}, $m_s = 3$ case. As the parameter $m_s$ approaches infinity, the random fluctuation in the specular components becomes deterministic and the FMR model reduces to the MWGD model.

\begin{figure}[!t]
	\centering
	\includegraphics[width=\textwidth]{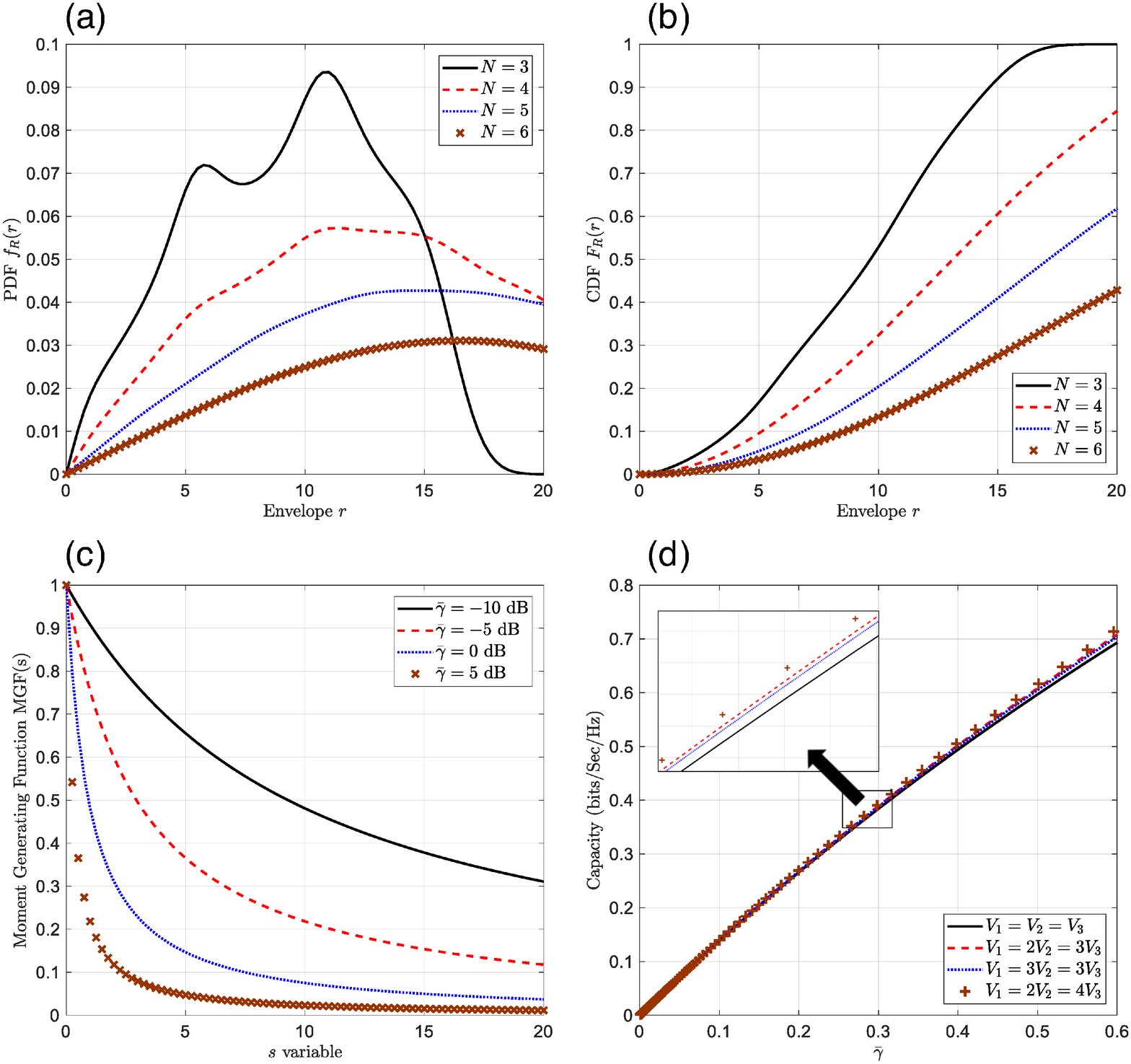}
	\caption{(a) The PDF of the MWGD model, (b) CDF of the MWGD model, (c) Moment generating function versus $s$ variable for various average SNR values, (d) Capacity versus average SNR for different combinations of the specular components.}
	\label{fig:draft0_fig3}	
\end{figure}

Figs. \ref{fig:draft0_fig3}(a)-(b) compares the PDF and CDF of the received signal envelope $R$ of the MWGD model over different number of specular components $N$, where we assumed $K^{(N)} = 10$ dB, $\Omega = m = 1$ and $V_1 = n \cdot V_n$ for $1 \leq n \leq N$. We observe that the multimodality is inherent in the MWGD model for small valued $N$. However, as the number of specular components increases, \textit{e.g.}, $N \geq 6$, the CLT holds and the PDF curves in \figref{fig:draft0_fig3}(a) follows an unimodal distribution. 

\figref{fig:draft0_fig3}(c) shows the moment generating function $\mathcal{M}_{\gamma}(s) = \mathbb{E}[\mathrm{e}^{-\gamma s}]$ of the MWGD model for different average SNR values with parameters $N=3$, $K^{(N)} = 10$ dB, $\Omega = m = 1$ and $V_1 = V_2 = V_3$. We note that the MGF of the SNR is an exponentially decaying function of $s$ and the decaying rate increases for a higher average SNR $\bar{\gamma}$. \figref{fig:draft0_fig3}(d) shows the ergodic capacity of the MWGD model for different combinations of the spectral components $(V_1, V_2, V_3)$ using parameters $N=3$, $K^{(N)} = 10$ dB and $\Omega = m = 1$. Based on the definition of the ratio $K^{(N)}$ in (\ref{q004-V-001}), the average received SNR in (\ref{q004-004}) can be expressed as $\bar{\gamma} = \gamma_0 \cdot \Omega \left( K^{(N)} + 1\right)$. Since the parameters $\gamma_0$, $\Omega$ and $K^{(N)}$ are constant in \figref{fig:draft0_fig3}(d), the capacity for different combinations of $(V_1, V_2, V_3)$ achieves similar values, which complies to the asymptotic behavior in (\ref{q004-IV-012}) for low average SNR.


\section{Conclusion}

In this paper, we introduced the MWGD and FMR fading models, which naturally extend the well-known TWDP and FTR model by generalizing the number of specular components from two to arbitrary $N$ and assuming generalized diffuse scattered model. We derived the PDF and CDF of the received signal envelop in closed form using the confluent Hypergeometric function of $n$ variables. To improve tractability, we applied the Laguerre polynomial expansion, which efficiently converts the multi-fold infinite summation inherent in the multi-variate Hypergeometric function to a single summation. Using the series form of the distribution functions, we derived important statistics of the proposed fading model, including the moments, moment generating function, amount of fading and channel quality estimation index. Furthermore, we evaluated the performance metrics of wireless communications systems, such as the capacity, outage probability and average bit error rate. Finally, we investigated the performance over a range of channel parameters using numerical simulations and observed that the MWGD and FMR fading models are inherently multimodal distribution. However, as the number of specular component increases, the multimodality subsides and both model follow unimodal distributions. 

\begin{equation}
	\begin{split}
	&\mathbf{P}\left(\text{SINR} > \zeta\right) = a_0^{-1} + \sum_{l=0}^{\infty} \sum_{n=1}^{l+\mu-1} \frac{w_l (-1)^n}{n!}  C_n,\\
	&a_k = \frac{\left( -\zeta \right)^k\delta}{k-\delta} \sum_{l=0}^{\infty} w_l (l+\mu)_{k} \cdot \sum_{j=1}^{4} \eta_j(\bar{\nu}_j)^k \pFq{2}{1}{l + \mu + k, k-\delta}{k+1-\delta}{-\bar{\nu}_j\zeta}\\
	&C_n = \sum_{k=1}^{n} \frac{B_{n,k}(a_1,\ldots,a_{n-k+1}) \Gamma\left( k+1 \right)}{a_0^{k+1}}, \quad 
	w_l  = \frac{(m)_l}{l!}\left( 1 - \frac{\theta_1}{\theta_2} \right)^l \left( \frac{\theta_1}{\theta_2} \right)^m
	\end{split}
	\label{eq:main_cp}
\end{equation}


\section*{Appendix I}

In this appendix, we summarize the operational equalities of the special functions, which are used in this paper\footnote{Most of the expressions in Appendix I were introduced in \cite{Gradshteyn1994}, except for (\ref{eq_app_I-002-a-ext5}) and (\ref{eq_app_I-002-a-ext6}), which were proved in \cite{Saad2003}.}. First, the generalized Laguerre polynomial of degree $n$ and order $\beta$ has the following functional identities
        \begin{equation}
          \begin{split}
          L_{n}^{\beta}(t) &= \sum_{i = 0}^{n} (-1)^i \binom{n+\beta}{n-i} \frac{t^i}{i!}, \quad 
          L_{n}^{0}(t) = L_{n}(t), 
          \end{split}
          \label{eq_app_I-001-a}
      \end{equation}
        \begin{equation}
          \begin{split}
      t^{\beta} \exp\left( -t \right) L_{n}^{\beta}(t) \mathrm{d}t &= \frac{1}{n}
      \mathrm{d}\left[ t^{\beta+1} \exp\left( -t \right) L_{n-1}^{\beta+1}(t) \right],\\
      \int_{0}^{t} 2 \beta r \mathrm{e}^{-\beta r^2} L_{m}\left( \beta r^2\right) \mathrm{d}r &= 
      \sum_{k=0}^{m}\frac{(-1)^k}{k!}\binom{m}{k} \gamma\left(k+1, \beta t^2 \right),
          \end{split}
          \label{eq_app_I-001-b}
      \end{equation}
where $\gamma(s, x) = \int_{0}^{x} t^{s-1} \mathrm{e}^{-t} \mathrm{d}t$ represents the lower incomplete gamma function.
The following identities holds for the lower incomplete gamma function with arbitrary positive real constant $s$
        \begin{equation}
          \begin{split}
    \frac{\gamma(s, x)}{\Gamma(s)} &= \sum_{n = 0}^{\infty}
    \frac{x^{s+n} \mathrm{e}^{-x}}{\Gamma(s + n + 1)},\quad 
    \gamma(s, x) = s^{-1} x^s \mathrm{e}^{-x} \pFq{1}{1}{1}{1+s}{x},
          \end{split}
    \label{eq_app_I-002-a-ext4}
      \end{equation}
        \begin{equation}
          \begin{split}
          \lim_{x \rightarrow 0} \gamma(s, x) = \frac{x^s}{s}.
          \end{split}
    \label{eq_app_I-exo1-003}
      \end{equation}

The following properties of hypergeometric function hold for real constants $a, b$ and $c$
        \begin{equation}
          \begin{split}
          \pFq{1}{1}{a}{b}{t} &= \mathrm{e}^{t} \pFq{1}{1}{b-a}{b}{-t},\\
          \pFq{2}{1}{a, b}{c}{z} &= (1-z)^{-a} \pFq{2}{1}{a, c-b}{c}{\frac{z}{z-1}},
          \end{split}
          \label{eq_app_I-002-a-ext1}
      \end{equation}
        \begin{equation}
          \begin{split}
      \int_{0}^{\infty} t^{\alpha-1} \mathrm{e}^{-c t} \pFq{1}{1}{a}{b}{-t} \mathrm{d}t &=
        c^{-\alpha} \Gamma(\alpha) \pFq{2}{1}{a, \alpha}{b}{-\frac{1}{c}} \quad
        \text{for } \alpha > 0 ~\mathrm{and}~ c > 0,
          \end{split}
          \label{eq_app_I-002-b}
      \end{equation}
        \begin{equation}
          \begin{split}
    &\left( (a-b) z + c - 2 a \right) \pFq{2}{1}{a, b}{c}{z} = 
    \left( c-a \right) \pFq{2}{1}{a-1, b}{c}{z} + 
    a \left( z-1 \right) \pFq{2}{1}{a+1, b}{c}{z},
          \end{split}
          \label{eq_app_I-002-a-ext2}
      \end{equation}
        \begin{equation}
          \begin{split}
    \int_{0}^{\infty} \mathrm{e}^{-(a x^2 + b x)}\mathrm{d}x &= 
    \frac{1}{2}\sqrt{\frac{\pi}{\alpha}} \exp\left( \frac{b^2}{4 a}\right) 
    \mathrm{erfc}\left( \frac{b}{2 \sqrt{a}} \right) \quad
        \text{for } a > 0 ~\mathrm{and}~ b > 0.
          \end{split}
          \label{eq_app_I-002-b-ext1}
      \end{equation}

The binomial coefficient can be defined for real constants $x, y$ using the gamma function as 
        \begin{align}
    \binom{x}{y} = \frac{\Gamma(x+1)}{\Gamma(y+1)\Gamma(x-y+1)},\quad
    \Gamma(t) = \int_{0}^{\infty}x^{t-1} \mathrm{e}^{-x} \mathrm{d}x.
        \label{eq_app_I-003}
        \end{align}
Appell's function $F_2\left( \Bigcdot \right)$ is defined via the Pochhammer symbol $(x)_{n} = \frac{\Gamma(x+n)}{\Gamma(x)}$ as follows
        \begin{align}
    &F_2\left( \alpha; \beta, \beta'; \gamma, \gamma'; x, y \right) = \sum_{m=0}^{\infty} \sum_{n=0}^{\infty}
    \frac{\left( \alpha \right)_{m+n} \left( \beta \right)_{m} \left( \beta' \right)_{n}}{m!~ n! \left( \gamma \right)_{m} \left( \gamma' \right)_{n}} x^m y^n.
    \label{eq_app_I-002-a-ext5}
        \end{align}
Appell's function can be reduced to the hypergeometric function using the following properties
  \begin{equation}
  \begin{split}
  F_2\left( d; a, a'; c, c'; 0, y \right) &= \pFq{2}{1}{d, a'}{c'}{y},\quad
  F_2\left( d; a, a'; c, c'; x, 0 \right) = \pFq{2}{1}{d, a}{c}{x}.
  \end{split}
  \label{eq_app_I-002-a-ext6}
  \end{equation}
The following integration holds under the following constraints $d > 0$ and $|k| + |k|' < |h|$  
\begin{equation}
  \begin{split}
  \int_{0}^{\infty} t^{d-1} \mathrm{e}^{-h t} \pFq{1}{1}{a}{b}{k t} \pFq{1}{1}{a'}{b'}{k' t} \mathrm{d}t = h^{-d} \Gamma(d) F_2\left( d; a, a'; b, b'; \frac{k}{h}, \frac{k'}{h} \right).
  \end{split}
  \label{eq_app_I-002-a-ext7}
  \end{equation}

The confluent Hypergeometric series of $n$ variables $\Psi_{2}\left( a; b_1, \ldots b_n ; x_1, \ldots, x_n \right)$ is defined in \cite[Volume 1, P385]{1579} as a $n$-fold infinite summation as written follows 
\begin{equation}
  \begin{split}
  \Psi_{2}\left( a; b_1, \ldots b_n ; x_1, \ldots, x_n \right) = \sum_{m_1, \ldots, m_n \geq 0} 
  \frac{(a)_{m_1+\ldots+m_n}}{(b_1)_{m_1} \cdots (b_n)_{m_n} m_1! \ldots m_n!} x_1^{m_1} \cdots x_n^{m_n},
  \end{split}
  \label{q004-eq_app_I-001}
  \end{equation}
The $n$-th order Bessel function of the first kind $\BesselJ{n}{x}$ has the following functional properties
      \begin{equation}
        \begin{split}
          \int_{0}^{t} x \BesselJ{0}{a x} \mathrm{d}x = \frac{1}{a} t \BesselJ{1}{a t} 
          \quad
        \text{for } t \geq 0,
        \end{split}
        \label{eq_app_I-exo1-001}
      \end{equation}
      \begin{equation}
        \begin{split}
        &\int_{0}^{\infty} t^{\nu-1} \mathrm{e}^{-p t} \cdot \prod_{i=1}^{n} \BesselJ{2 \mu_i}{2 \sqrt{a_i t}} \mathrm{d}t\\ 
        = &\frac{\Gamma(\nu+M)}{p^{\nu+M}} \left( \prod_{i=1}^{n} \frac{a_i^{\mu_i}}{\Gamma(2 \mu_i+1)} \right)
        \Psi_2\left( \nu+M; 2\mu_1+1, \ldots, 2\mu_n+1; \frac{a_1}{p}, \ldots, \frac{a_n}{p} \right),
        \end{split}
        \label{eq_app_I-exo1-002}
      \end{equation}
where $M = \sum_{i=1}^{n} \mu_i$ and (\ref{eq_app_I-exo1-002}) holds for $\operatorname{Re}(\nu+M) >0,~ \operatorname{Re}(p) >0$ \cite[Volume 1, P187, (43)]{1579}. 

Gauss-Laguerre quadratures can be used to evaluate the following integral for a given analytic function $g(x)$ as 
        \begin{align}
    \int_{0}^{\infty} \mathrm{e}^{-x} g(x) \mathrm{d}x = \sum_{n=1}^{N} w_n g\left( x_n \right) + R_N,
    \label{eq_app_I-002-a-ext9}         
        \end{align}
  where $x_n$ and $w_n$ are the $n$-th abscissa and weight of the $N$-th order Laguerre polynomial.

\section*{Appendix II}

In this appendix, we provide a proof of Theorem 1. First, we note that the in-phase and quadrature component of the received signal $\tilde{V}$ are circularly symmetric \cite{Durgin2002, Durgin2002a, Salo2006, Yu2007} due to the underlying assumption of uniformly distributed phase $\phi_i$ and $\phi_d$. Then, the PDF of the envelope $R$ and the joint characteristic function (CF) $\Phi(\nu)$ of the complex signal $\tilde{V}$ can be written as a Hankel transform pair as follows \cite{Durgin2002a, Salo2006}
\begin{equation}
  \begin{split}
  f_{R}(r) &= r \int_{0}^{\infty} \nu \Phi(\nu) \BesselJ{0}{r \nu} \mathrm{d}\nu,\quad
  \Phi(\nu) = \int_{0}^{\infty} f_{R}(r) \BesselJ{0}{\nu r} \mathrm{d}r.
  \end{split}
  \label{q004-eq_app_II-001}
\end{equation}
For the GMR-U model in (\ref{q004-001}), the CF $\Phi(\nu)$ can be derived as follows \cite{Abdi2000}
\begin{equation}
  \begin{split}
  \Phi(\nu) &= \mathbb{E}_{V_1, \ldots, V_N, V_d}\left[ \prod_{i=1}^{N} \BesselJ{0}{V_i \nu} \cdot \BesselJ{0}{V_d \nu} \right] = \prod_{i=1}^{N} \BesselJ{0}{V_i \nu} \mathbb{E}_{V_d}\left[ \BesselJ{0}{V_d \nu} \right]\\
  &= \Hypergeometric{1}{1}{m}{1}{-\frac{\Omega}{4m} \nu^2} \cdot \prod_{i=1}^{N} \BesselJ{0}{V_i \nu}, 
  \end{split}
  \label{q004-eq_app_II-002}
\end{equation}
where the second equality is due to the deterministic nature of $V_i$'s and the last expression follows by substituting (\ref{q004-ext-000}) and applying \cite[6.631]{Gradshteyn1994}. 

Next, we utilize the following series expansion of the confluent Hypergeometric function
\begin{equation}
  \begin{split}
  \Hypergeometric{1}{1}{m}{1}{-\frac{\Omega}{4m} \nu^2} &=\exp\left( -\frac{\Omega}{4m} \nu^2 \right) 
  L_{m-1}\left(\frac{\Omega}{4m} \nu^2\right)\\ &= \exp\left( -\frac{\Omega}{4m} \nu^2 \right)  
  \sum_{k = 0}^{m-1} \binom{m-1}{k} \frac{(-1)^k}{k!} \left( \frac{\Omega}{4m} \nu^2 \right)^k 
  \quad \text{for } m \in \mathbb{Z}^{+},
  \end{split}
  \label{q004-eq_app_II-003}
\end{equation}
where we applied \cite[07.20.03.0108.01]{wolfram} to the first equality and used (\ref{eq_app_I-001-a}) in the last equality. By substituting (\ref{q004-eq_app_II-002}) and (\ref{q004-eq_app_II-003}) to (\ref{q004-eq_app_II-001}), the PDF of the envelope $R$ can be expressed as follows 
\begin{equation}
  \begin{split}
  &f_{R}(r) = r \sum_{k = 0}^{m-1} \binom{m-1}{k} \frac{(-1)^k}{k!} \left( \frac{\Omega}{4m} \nu^2 \right)^k  \underbrace{\int_{0}^{\infty} \nu^{2k+1} \exp\left( -\frac{\Omega}{4m} \nu^2 \right) \prod_{i=1}^{N} \BesselJ{0}{V_i \nu} \cdot \BesselJ{0}{r \nu} \mathrm{d}\nu}_{\triangleq I_1},
  \end{split}
  \label{q004-eq_app_II-004}
\end{equation}
where the integral $I_1$ can be evaluated in a closed form expression as follows
\begin{equation}
  \begin{split}
  I_1 &= \frac{1}{2} \int_{0}^{\infty} t^{k} ~\mathrm{e}^{-\frac{\Omega}{4m} t} \cdot \prod_{i=1}^{N} \BesselJ{0}{V_i t^{1/2}} \BesselJ{0}{r t^{1/2}} \mathrm{d}t\\
  &= \frac{\Gamma(k+1)}{2} \left( \frac{4m}{\Omega} \right)^{k+1} 
  \Psi_{2}\left( k+1; [1]_{N+1};  
  \frac{V_1^2 m}{\Omega}, \ldots, \frac{V_N^2 m}{\Omega}, \frac{r^2 m}{\Omega}
  \right),
  \end{split}
  \label{q004-eq_app_II-005}
\end{equation}
by employing a change of variable, \textit{i.e.}, $t \leftarrow \nu^2$, and applying (\ref{eq_app_I-exo1-002}) to (\ref{q004-eq_app_II-005}). Hence, by substituting (\ref{q004-eq_app_II-005}) to (\ref{q004-eq_app_II-004}), (\ref{q004-004}) follows readily. 

The corresponding CDF of the signal envelope $R$ can be evaluated as below
\begin{equation}
  \begin{split}
  F_{R}(t) &= \int_{0}^{t} f_{R}(r) \mathrm{d}r = \int_{0}^{\infty} \nu 
  \Hypergeometric{1}{1}{m}{1}{-\frac{\Omega}{4m} \nu^2} \prod_{i=1}^{N} \BesselJ{0}{V_i \nu} 
  \left( \int_{0}^{t} r \BesselJ{0}{r \nu} \mathrm{d}r \right) \mathrm{d}\nu \\
  &= t \int_{0}^{\infty} 
  \Hypergeometric{1}{1}{m}{1}{-\frac{\Omega}{4m} \nu^2} \prod_{i=1}^{N} \BesselJ{0}{V_i \nu} 
  \cdot \BesselJ{1}{t \nu} \mathrm{d}\nu,
    \end{split}
  \label{q004-eq_app_II-006}
\end{equation}
by substituting (\ref{q004-004}) to the first equality, changing the order of integration in the second equality
and applying (\ref{eq_app_I-exo1-001}) in the third equality. The closed form expression of (\ref{q004-eq_app_II-006}) can be easily derived by following the similar procedure as the PDF of $R$, \textit{i.e.}, use the series expansion from (\ref{q004-eq_app_II-003}) and apply (\ref{eq_app_I-exo1-002}) with a change of variable $t \leftarrow \nu^2$. This completes the proof.

\section*{Appendix III}

In this appendix, we provide a proof of Theorem 2. The PDF of the FMR model is given by 
\begin{equation}
  \begin{split}
  f_{R}(r) &= \int_{0}^{\infty} f_{R|\xi}(r|u) f_{\xi}(u) \mathrm{d}u.
  \end{split}
  \label{q004-eq_app_III-001}
\end{equation}
Since the FMR model at a given $\xi$ follows the GMR-U model, (\ref{q004-eq_app_III-001}) can be written as below
\begin{equation}
  \begin{split}
  f_{R}(r) &= r \int_{0}^{\infty} \Hypergeometric{1}{1}{m}{1}{-\frac{\Omega}{4m} \nu^2} \cdot 
  \nu \BesselJ{0}{r \nu} \cdot \mathbb{E}_{\xi}\left[ \prod_{i=1}^{N} \BesselJ{0}{\sqrt{\xi} V_i \nu} \right]
	\mathrm{d}\nu\\
	&= \frac{m_s^{m_s}}{\Gamma(m_s)} 
	r \int_{0}^{\infty} \Hypergeometric{1}{1}{m}{1}{-\frac{\Omega}{4m} \nu^2} 
  \nu \BesselJ{0}{r \nu} \left( \int_{0}^{\infty} u^{m_s-1} \mathrm{e}^{-m_s u} 
	\prod_{i=1}^{N} \BesselJ{0}{\sqrt{u} V_i \nu} \mathrm{d}u \right)
	\mathrm{d}\nu\\
	&= r \int_{0}^{\infty} \Hypergeometric{1}{1}{m}{1}{-\frac{\Omega}{4m} \nu^2}
      \nu \BesselJ{0}{r \nu} \cdot 
      \Psi_{2}\left( m_s; [1]_{N}; \underline{\beta}(\nu) \right) 
      \mathrm{d}\nu,
  \end{split}
  \label{q004-eq_app_III-002}
\end{equation}
where we used (\ref{q004-004}) and changed the order of integration in the first equality, applied (\ref{q004-ext-001}) in the second equality and employed (\ref{eq_app_I-exo1-002}) in the last equality. The CDF in (\ref{q004-010}) can be readily achieved by using similar procedures as (\ref{q004-eq_app_II-006}), \textit{i.e.}, change the order of integration and apply (\ref{eq_app_I-exo1-001}). This completes the proof.

\section*{Appendix IV}

In this appendix, we provide a proof of Theorem 3. It is known that an arbitrary PDF of a positive random variable can be expressed in a series form using its higher order moments and the Laguerre polynomial as follows \cite{Abdi2000,Chai2009, Chun2016b}
\begin{equation}
  \begin{split}
  f_R(r) &= 2 \epsilon r \exp\left( -\epsilon r^2 \right) \cdot \sum_{n=0}^{\infty} C_n L_n\left( \epsilon r^2 \right),
  \end{split}
  \label{q004-eq_app_IV-001}
\end{equation}
where $L_n(x)$ is the Laguerre polynomial, $\epsilon$ is a positive, real valued constant\footnote{
	The constant $\epsilon$ was originally introduced in \cite{Abdi2000} as a tunable parameter that can minimize the approximation error of (\ref{q004-eq_app_IV-001}). Recently, the authors in \cite{Chai2009} proposed the optimal $\epsilon$ as $\epsilon = \left({\sum_{i=1}^{N} V_i^2 + \Omega}\right)^{-1}$ and numerically validated their proposition. We adopt this proposition and assume $\epsilon = \left({\sum_{i=1}^{N} V_i^2 + \Omega}\right)^{-1}$ in this paper. 
	} 
that can control the approximation accuracy, the coefficient $C_n$ denotes the following expression
\begin{equation}
  \begin{split}
  C_n &= \mathbb{E}_{R}\left[ L_n(\epsilon r^2) \right] 
	= \sum_{k=0}^{n} \frac{\left( -\epsilon \right)^k}{k!} \binom{n}{k} \mathbb{E}_{R}[r^{2k}],
	  \end{split}
  \label{q004-eq_app_IV-002}
\end{equation}
using (\ref{eq_app_I-001-a}) in the last equality. The corresponding series form of the CDF can be obtained by integrating (\ref{q004-eq_app_IV-001}) from zero to $t$ and applying (\ref{eq_app_I-001-b}), which achieves (\ref{q004-IIIB-001}). 

To evaluate (\ref{q004-eq_app_IV-002}), we use the following notation of the even moments and adopt the recursive formulation that was proposed in \cite{Goldman1972} as descried below
\begin{equation}
	\begin{split}
		\nu_{i}^{(2k)} = 
		\begin{dcases}
		\mathbb{E}\left[ \left( X_i^2 + Y_i^2 \right)^k\right] \hfill &\text{for } i = 1, \ldots, N,\\
		\mathbb{E}\left[ V_d^{2k} \right] \hfill &\text{for } i = N+1, \text{ which is the diffuse component},\\
		\end{dcases}
	\end{split}
	\label{q004-eq_app_IV-003}
\end{equation}
where $X_i$ and $Y_i$ represents the real and imaginary term of the $i$-th specular component. The magnitude of each specular component is constant for the MWGD model, whereas for the FMR model, $\nu_{i}^{(2k)}$ requires the $k$-th order moments of the Gamma distribution 
\begin{equation}
	\begin{split}
		\nu_{i}^{(2k)}~\text{for } i = 1, \ldots, N = 
		\begin{dcases}
		V_i^{2k} \hfill &\text{for MWGD model},\\
		\mathbb{E}_{\xi}\left[ \mathbb{E}\left[ \left( \sqrt{\xi} V_i\right)^{2k}\bigg\rvert\xi \right] \right] 
		= \mathbb{E}\left[\xi^k\right] \cdot V_i^{2k}  \hfill &\text{for FMR model},
		\end{dcases}
	\end{split}
	\label{q004-eq_app_IV-004}
\end{equation}
which can be evaluated by using (\ref{q004-ext-001}) and the notion of Gamma function as follows
\begin{equation}
	\begin{split}
	\mathbb{E}\left[\xi^k\right] &= \frac{{m_s}^{m_s}}{\Gamma({m_s})}
	\int_{0}^{\infty}  u^{{m_s+k}-1} \mathrm{e}^{-{m_s} u} \mathrm{d}u = 
	\frac{1}{\Gamma({m_s}) m_s^k} 
	\int_{0}^{\infty}  t^{{m_s+k}-1} \mathrm{e}^{-t} \mathrm{d}t = 
	\frac{\left(m_s\right)_k}{m_s^k}. 
	\end{split}
	\label{q004-eq_app_IV-005}
\end{equation}
Similarly, the moment $\nu_{N+1}^{(2k)}$ of the diffuse component can be evaluated as follows
\begin{equation}
	\begin{split}
	\mathbb{E}\left[V_d^{2k}\right] &= \frac{2}{\Gamma(m)}\left( \frac{m}{\Omega} \right)^m 
	\int_{0}^{\infty} 
	x^{2k+2m -1} \exp\left(-\frac{m}{\Omega} x^2\right) \mathrm{d}x\\
	&= \frac{1}{\Gamma({m})} \left( \frac{\Omega}{m} \right)^k 
	\int_{0}^{\infty} t^{k+m-1} \mathrm{e}^{-t} \mathrm{d}t = (m)_k \left( \frac{\Omega}{m} \right)^k, 
	\end{split}
	\label{q004-eq_app_IV-006}
\end{equation}
where we substituted (\ref{q004-ext-000}) in the first equality, used a change of variable, \textit{i.e.}, $t \leftarrow \frac{m}{\Omega} x^2$, in the second equality and applied (\ref{eq_app_I-003}) in the last equality. This completes the proof.

\bibliographystyle{IEEEtran}
\bibliography{bib1}

\end{document}